\documentclass[%
 reprint,
 amsmath,amssymb,
 aps
]{revtex4-2}
\usepackage{hyperref}
\usepackage[capitalise,noabbrev]{cleveref}
\usepackage{amsfonts,amssymb,amsthm,amsmath}
\usepackage{graphicx}
\usepackage{dcolumn}
\usepackage{bm}
\usepackage[mathlines]{lineno}
\usepackage{color}
\usepackage[capitalise]{cleveref}
\usepackage{tabularray}
\usepackage{booktabs}
\newtheorem{theorem}{Theorem}
\newtheorem{remark}{Remark}
\newtheorem{example}{Example}
\newtheorem{lemma}{Lemma}

\newtheorem{proposition}{Proposition}
\newtheorem{definition}{Definition}
\usepackage{mathtools}
\usepackage{comment}
\usepackage{soul}

\DeclarePairedDelimiterX\braket[2]{\langle}{\rangle}{#1\,\delimsize\vert\,\mathopen{}#2}

\interfootnotelinepenalty=10000

\newcommand*{\spec}{{\mathrm{Sp}}}

\newcommand*{\dist}{{\mathrm{dist}}}
\newcommand*{\diag}{{\mathrm{diag}}} 
\newcommand*{\dH}{{d_{\mathrm{H}}}}



\usepackage{enumitem}
\usepackage{tabularx}

\newcommand*{\dd}{{\,\mathrm{d}}}

\newcommand*{\MH}{{\mathcal{M}_{\mathrm{H}}}}

\newcommand*{\OS}{{\Omega_{h}}}

\newcommand*{\OT}{{\Omega_{td}}}



\begin{document}

\preprint{APS/123-QED}

\title{Computation and Verification of Spectra for Non-Hermitian Systems}

\author{Catherine Drysdale}
 \altaffiliation{MARS, Lancaster University}
\email{c.drysdale@lancaster.ac.uk}
\author{Matthew Colbrook}
 \altaffiliation{DAMTP, University of Cambridge}
\email{mjc249@cam.ac.uk}
\author{Michael T. M. Woodley}
\altaffiliation{Department of Physics, University of Bath}
\email{mw2970@bath.ac.uk}

\date{\today}

\begin{abstract}
We establish a connection between quantum mechanics and computation, revealing fundamental limitations for algorithms computing spectra, especially in non-Hermitian settings. Introducing the concept of locally trivial pseudospectra (\textit{LTP}), we show such assumptions are necessary for spectral computation. \textit{LTP} adapts dynamically to system energies, enabling spectral analysis across a broad class of challenging non-Hermitian problems. Exploiting this framework, we overcome a longstanding obstacle by computing the eigenvalues and eigenfunctions of the imaginary cubic oscillator $H_{\mathrm{B}} = p^2 + i x^3$ with error bounds and no spurious modes—yielding, to our knowledge, the first such error-controlled result. We confirm, for instance, the 100th eigenvalue as $627.6947122484365113526737029011536\ldots$. Here, truncation-induced $\mathcal{PT}$-symmetry breaking causes spurious eigenvalues—a pitfall our method avoids, highlighting the link between truncation and physics. Finally, we illustrate the approach's generality via spectral computations for a range of physically relevant operators. This letter provides a rigorous framework linking computational theory to quantum mechanics and offers a precise tool for spectral calculations with error bounds.
\vspace{-5mm}   
\end{abstract}

\maketitle

Among the axioms of quantum mechanics \cite{dirac1981principles,von2018mathematical}, Hermiticity of Hamiltonians stands out as mathematical rather than physical. While this condition ensures that eigenvalues of observables are real, it is sufficient rather than necessary. In \cite{BenderRealSpectrainNon-HermitianHamiltonians}, Bender \& Boettcher analyzed eigenvalues of the imaginary cubic oscillator:
\begin{equation}\setlength\abovedisplayskip{4pt}\setlength\belowdisplayskip{4pt}
\label{eqn:H}
    H_{\mathrm{B}} = p^2 + i x^3.
\end{equation}
Remarkably, $H_{\mathrm{B}}$ is not Hermitian, yet has a real, discrete, and bounded-below spectrum \cite{dorey2001spectral,caliceti1980perturbation}. This Hamiltonian is also connected to fascinating physical properties, via a more general Lagrangian formulation in scalar quantum field theory, such as $\mathcal{L} = \frac{1}{2}(\partial\phi)^{2} + \frac{1}{2}m^{2}\phi^{2} - g(i\phi)^{N}$ $(N\geq2)$ \cite{Bender1999}, which can allow for asymptotic freedom where conventional theories lack it \cite{Bender2000,Bender2001}. Implications for supersymmetry \cite{Bender1998} and the Higgs boson \cite{Bender2000} have been investigated. Moreover, $\mathcal{PT}$-symmetric quantum electrodynamics has been introduced as an analog of the imaginary cubic field theory ($N=3$) \cite{Bender2005}, following earlier work on calculating the fine structure constant \cite{Bender1999}.

Research on non-Hermitian quantum systems has flourished, supported by a robust complex extension of quantum mechanics \cite{BenderComplexExtension,BenderNonHermitianHamiltoniansSense,bender2019pt}. There have been extensive experimental studies across electromagnetism \cite{PhysRevLett.103.093902,ruter2010observation,feng2011nonreciprocal,regensburger2012parity,xiao2021observation,peng2014parity,bittner2012pt,zhang2020synthetic,weimann2017topologically}, acoustics \cite{shi2016accessing,auregan2017pt}, electronics \cite{chtchelkatchev2012stimulation,schindler2011experimental,bender2013observation,cao2022fully,yang2022observation}, mechanical systems \cite{bender2013observation2}, metamaterial design \cite{li2024experimental}, bound states in the continuum and quantum scattering \cite{soley2023experimentally}, solitons in wave optics \cite{musslimani2008optical} and transitions in optical structures \cite{zhang2016observation,chong2011pt,xiao2017observation}. The intersection of non-Hermiticity with topology is a major source of recent interest, and encompasses phenomena that include edge states \cite{Banerjee2023}, defect states \cite{Zeng2023}, exceptional rings \cite{Bergholtz2021}, superconductivity \cite{Ohnmacht2025}, and novel singularities \cite{Hu2024}, in addition to providing new system classification schemes \cite{Yang2024,Liu2019}.

Eigenvalues are crucial in physical applications of non-Hermitian systems. Systems associated with gain (due to energy sources) and loss (due to the environment) have coupled eigenvalue pairs \cite{PhysRevLett.103.093902,el2018non}, where a perturbation to one results in a change in the other, allowing robustness. This is particularly useful in laser technology \cite{feng2014single,hodaei2014parity}, where optical loss can lead to poor beam quality, and in wireless power transfer \cite{assawaworrarit2017robust}, where it is important to maintain transfer efficiency. Additionally, non-Hermitian systems can exhibit exceptional points, with applications in sensing \cite{chen2016pt,liu2016metrology}.

Eigenvalues also underpin the foundations of $\mathcal{PT}$-symmetric quantum mechanics. A Hamiltonian $H$ is $\mathcal{PT}$-symmetric if it is invariant under the combined parity ($p\mapsto -p,x \mapsto -x$) and time-reversal ($p\mapsto -p,x \mapsto x,i\mapsto-i$) operators. If $H$ has unbroken $\mathcal{PT}$ symmetry, it has $\mathcal{CPT}$ symmetry for a hidden symmetry $\mathcal{C}$ \cite{CarlMBenderHiddenSymmetry, BenderHamiltonianHermitian}. The dynamics are unitary with respect to a $\mathcal{CPT}$ inner product $\langle f,g\rangle_{\mathcal{CPT}}=\langle \mathcal{CPT}f,g\rangle$. The development of this theory is analogous
to the problem that Dirac encountered in formulating the spinor wave equation in relativistic quantum theory. However, one must first find the eigenstates of $H$ to define $\mathcal{C}$ and the physical Hilbert space \cite{BenderCubicInteraction}. Hence, $\mathcal{PT}$-symmetric quantum mechanics is a bootstrap theory: $\mathcal{C}$ is spectrally determined and accurately calculating eigenvalues is essential. 

Computing eigenvalues of non-Hermitian operators is notoriously difficult. Previous approaches for $H_{\mathrm{B}}$, such as direct integration \cite{BenderRealSpectrainNon-HermitianHamiltonians}, domain truncation \cite{BoegliSieglTretter}, variational saddle-points \cite{bender1999variational}, perturbation theory \cite{BenderCubicInteraction,Mostafazadeh_2006}, and coupled moment problems \cite{handy2003moment}, lack explicit error bounds. Truncation to finite-dimensional problems often introduces spurious eigenvalues (see \cref{fig:pseudospectra}). Addressing this remains a key open problem in non-Hermitian quantum mechanics.

Recent work in computability theory has focused on determining whether algorithms can compute spectra for broad classes of operators. This has culminated in the solvability complexity index (SCI) hierarchy \cite{Hansen_JAMS,ben2015can,colbrook2020PhD}, which classifies algorithmic limits for spectral computations. Notably, recent efforts have targeted non-Hermitian spectral problems \cite{chandler2024spectral}.

This letter addresses the challenges of computing spectra for a broad class of non-Hermitian operators, focusing on $H_{\mathrm{B}}$, due to its central role in $\mathcal{PT}$ symmetry and its striking physical implications, particularly in quantum field theory. To overcome these challenges, we introduce the concept of ``locally trivial pseudospectra" (\textit{LTP}). For $\mathcal{PT}$-symmetric operators, \textit{LTP} exploits the local dependence of $\langle \cdot,\cdot\rangle_{\mathcal{CPT}}$ on the spectral data of $H$, essentially building local approximations of $\mathcal{C}$. This corresponds to Hilbert spaces that adapt dynamically to observed energy. Moreover, \textit{LTP} extends beyond $\mathcal{PT}$-symmetric operators---see \cref{fig:other_examples} for further examples.

We first prove that computing spectra of non-Hermitian operators is generally impossible without two key physical assumptions. The first concerns the interactions of an operator representation, ensuring that $H\psi$ can be computed with error bounds for wavefunctions $\psi$. For $H_{\mathrm{B}}$, we achieve this via rectangular truncations of a finite-range representation and explain how to extend the approach to long-range interactions. The second assumption, encoded by \textit{LTP}, provides local control over the operator’s non-Hermiticity, physically related to the $\mathcal{C}$ operator in the case of $\mathcal{PT}$-symmetric systems. Precisely, it bounds the growth of the resolvent norm near the spectrum (we define \textit{LTP} after some background). By combining control over interaction terms with \textit{LTP}, we bound the action of $\mathcal{C}$ on subspaces. In this way, \textit{LTP} connects physics with computation: the ability to compute spectra hinges on foundational quantum-mechanical structure. We then perform the first verified (i.e., with error bounds) computations of eigenvalues and eigenfunctions of $H_{\mathrm{B}}$, using a diffusion semigroup $\exp(-tH_{\mathrm{B}})$ and asymptotics to bound spectral projections and demonstrate \textit{LTP}. This dynamic perspective generalizes to a broad class of non-Hermitian operators.

This letter lays the foundation for rigorously verified spectra across a broad class of systems by introducing a new technology in non-Hermitian quantum mechanics. It provides a powerful tool for accurately computing and verifying spectral properties, bridging the gap between physical and computational feasibility. Beyond resolving the spectrum of $H_{\mathrm{B}}$, this work has broad implications in physics and applications of spectral computations (further illustrated in the End Matter, e.g., for exceptional points), and has foundational implications in unifying quantum mechanics and computation.

\textit{Fundamental limitations of computing spectra.---}
Two major, and separate, challenges in computing a Hamiltonian's spectrum are long-range interactions and non-Hermiticity. To illustrate this, we use the SCI hierarchy and adopt a general framework for defining computational problems---applicable to all such problems found in physics---and a general algorithm for solving them (see S.M.). A computational problem is defined by a function representing a physical quantity (e.g., spectra), with convergence to a solution measured within a metric space. One must also specify the information available to the algorithm, such as interaction data. This setup allows us to determine whether a convergent sequence of algorithms exists to solve the problem.

\begin{figure}[t]
\centering
\includegraphics[width=0.5\textwidth,trim=1mm 1mm 1mm 1mm,clip]{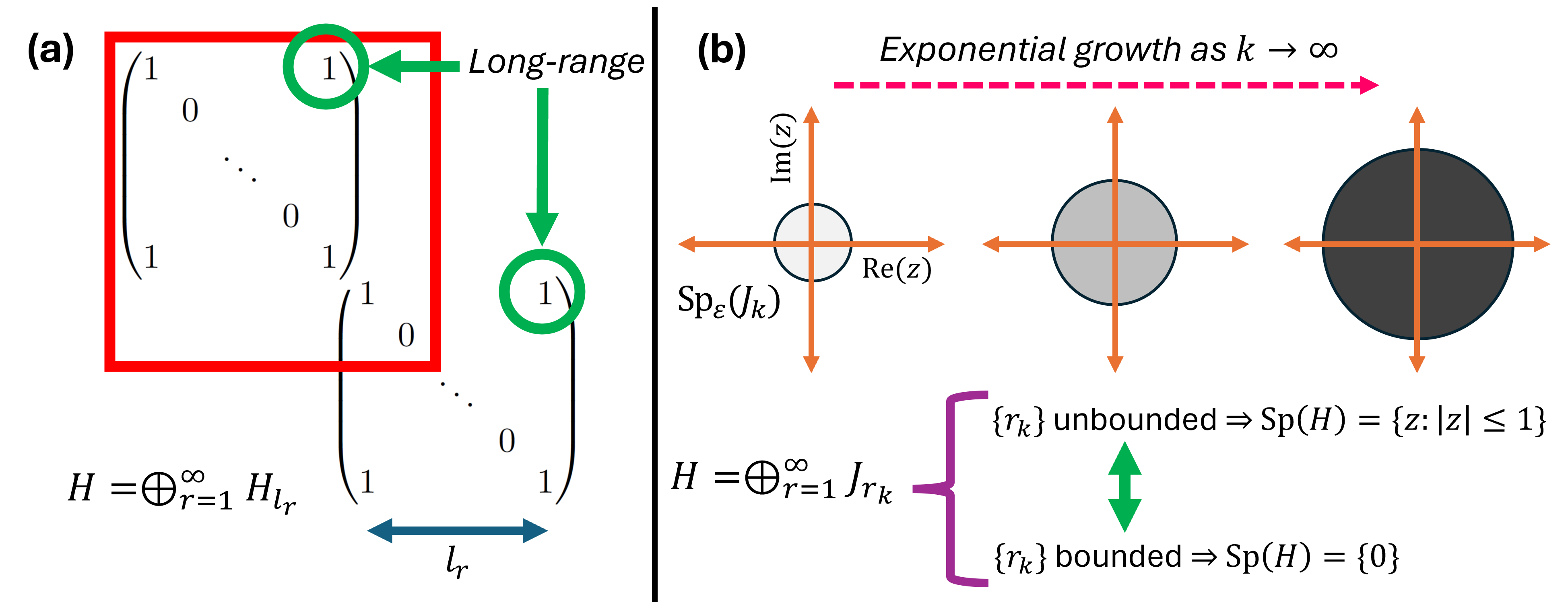}\vspace{-2mm}
\caption{\textbf{(a)}: Illustration of statement (A). We construct direct sums of Hermitian matrices with eigenvalues $\{0,2\}$ and long-range terms. Block sizes are chosen to `trick' the sequence of algorithms into approximating an eigenvalue  $1$. The red box shows the information read by the algorithm when making the approximation. \textbf{(b)}: Illustration of statement (B). We use direct sums of Jordan matrices whose pseudospectra grow rapidly with size (top panel). By varying block sizes, the algorithm is forced to oscillate (green arrow) between $\{0\}$ (when sizes are bounded) and the unit disk (otherwise).\vspace{-2mm}}
\label{fig:proof_idea}
\end{figure}

Consider a Hamiltonian $H$ with matrix elements $\{h_{ij}\}_{i,j\in\mathbb{N}}$.
We seek a sequence of algorithms $\{\Gamma_{m}\}$ that, given $\{h_{ij}\}$ as input, produce outputs $\Gamma_{m}(H)$ converging to the spectrum $\mathrm{Sp}(H)$ as $m \to \infty$.

First, let $H$ be Hermitian ($h_{ij} = \overline{h_{ji}}$), composed as a direct sum of smaller Hermitian blocks that may include long-range terms \cite{Lakkaraju2022, Eldredge2017, Schauss2012}—see Fig.~\ref{fig:proof_idea}(a). We prove (S.M.) that (A) no algorithmic sequence can converge to $\mathrm{Sp}(H)$ for all such inputs. The core obstacle, labeled \textbf{(R1)}, is the inability to control the presence of long-range terms.

Second, let $H$ be a (possibly non-Hermitian) tridiagonal Hamiltonian ($h_{ij}=0\text{ if }|i-j|>1$), formed as a direct sum of Jordan blocks. We prove (S.M.) that (B) no algorithmic sequence can converge to $\mathrm{Sp}(H)$ for all such inputs. The key issue, labeled \textbf{(R2)}, is the uncontrolled blow-up of $\|(H-zI)^{-1}\|$ as $z\rightarrow\mathrm{Sp}(H)$, a phenomenon unique to non-Hermitian operators.

These computational boundaries are fundamental and hold for \textit{any} algorithm, regardless of the operations used. They also extend to the computation of eigenvectors. However, results (A) and (B) do not imply that computing $\mathrm{Sp}(H)$ for a specific operator is impossible, only that any method must address \textbf{(R1)} and \textbf{(R2)}.

\textit{Introducing LTP to tackle \textbf{(R2)}.---}Non-Hermitian properties of a Hamiltonian $H$ can be studied through pseudospectra \cite{trefethen2005spectra}:
$$
\mathrm{Sp}_\epsilon(H):=\{z\in\mathbb{C}:\|(H-zI)^{-1}\|^{-1}\leq\epsilon\}, \quad\epsilon>0,
$$
where $\smash{\|(H-zI)^{-1}\|^{-1}=0}$ for $z\in\mathrm{Sp}(H)$. Equivalently, $\mathrm{Sp}_\epsilon(H)$ is the union of $\mathrm{Sp}(H+H')$ for perturbations $\|H'\|\leq\epsilon$ \cite[Prop. 4.15]{roch1996c}. Hence, $\mathrm{Sp}_\epsilon(H)$ tells us how far an $\epsilon$-sized perturbation can perturb $\mathrm{Sp}(H)$.

To address \textbf{(R2)}, we use $\|(H-zI)^{-1}\|^{-1}$ to approximate $\mathrm{dist}(z, \mathrm{Sp}(H))$, the distance from $z$ to the spectrum. For general $H$, we have $\|(H - zI)^{-1}\|^{-1} \leq \mathrm{dist}(z, \mathrm{Sp}(H))$, with equality when $H$ is Hermitian \cite{davies2000pseudospectra}. For non-Hermitian $H$, this quantity can significantly underestimate the true distance. If $H$ is quasi-Hermitian—i.e., $H^* \Theta = \Theta H$ for a bounded, positive Hermitian $\Theta$ with bounded inverse—then $\mathrm{dist}(z, \mathrm{Sp}(H)) \leq \sqrt{\|\Theta\|\|\Theta^{-1}\|} \|(H - zI)^{-1}\|^{-1}$, a condition known as trivial pseudospectrum \cite{KSTV}. In unbroken $\mathcal{PT}$-symmetric cases, this condition corresponds to a bounded transformation between $\langle \cdot,\cdot \rangle$ and $\langle \cdot,\cdot\rangle_{\mathcal{CPT}}$.  However, many important non-Hermitian operators, such as $H_\mathrm{B}$, do not exhibit trivial pseudospectra \cite{PhysRevD.86.121702}.

\begin{figure}[t]
\centering
\includegraphics[width=0.48\textwidth]{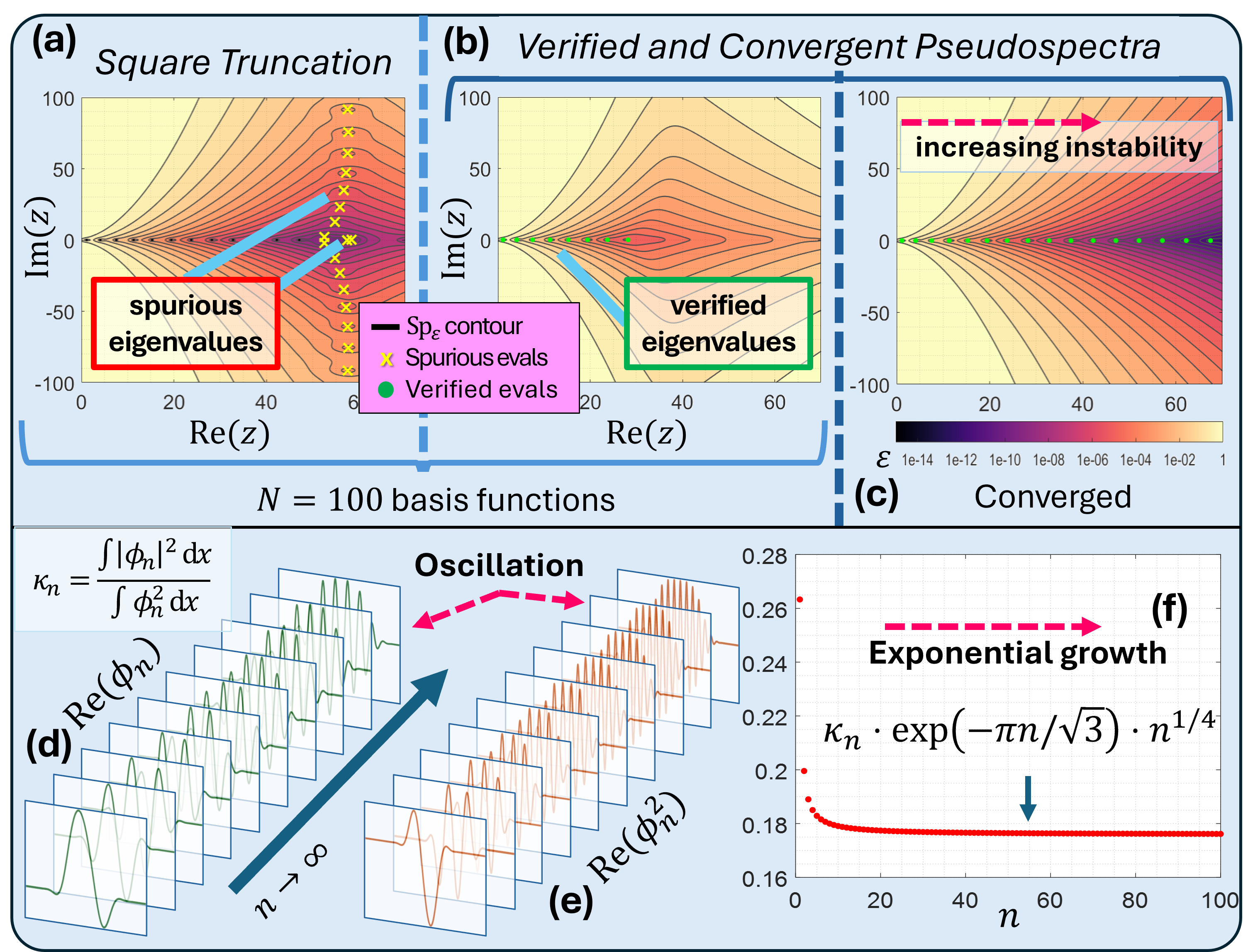}\vspace{-2mm}
\caption{\textbf{(a)}: Pseudospectra approximations via truncated Hermite expansions (square truncations), which break $\mathcal{PT}$-symmetry and produce spurious eigenvalues (yellow crosses). \textbf{(b,c)}: Rectangular truncations preserve $\mathcal{PT}$-symmetry, yield convergent approximations, and remain within the true pseudospectrum. Color indicates $\epsilon=\|(H_{\mathrm{B}}-zI)^{-1}\|^{-1}$. \textbf{(d,e)}: Eigenfunction oscillations. \textbf{(f)}: Exponential instability of eigenvalues, with $\lim_{n\rightarrow\infty}\kappa_n\exp(-n\pi/\sqrt{3})n^{1/4}\approx 0.176$.\vspace{-2mm}}\label{fig:pseudospectra}
\end{figure}

We say an operator $H$ has \textit{LTP} if, for every compact set $K\subset\mathbb{C}$, there exists $C_K>0$ such that for all $\epsilon>0$:
\begin{equation}\label{eq:locally_trivial_pseudopsectrum}
\setlength\abovedisplayskip{2pt}\setlength\belowdisplayskip{2pt}
\mathrm{Sp}_{\epsilon}(H)\cap K\subset\{z\in\mathbb{C}:\mathrm{dist}(z,\mathrm{Sp}(H))\leq C_K\epsilon\}.
\end{equation}
This means the pseudospectrum (and thus eigenvalue perturbations) is locally controlled by the distance to the spectrum. For example, if $H$ has a compact resolvent and all eigenvalues are semisimple, then $\|(H-zI)^{-1}\| = \mathcal{O}(1/|z-\lambda|)$ near each eigenvalue $\lambda$ \cite[Sec. III.6.5]{kato2013perturbation}. Since each compact set contains only finitely many such $\lambda$, $H$ satisfies \textit{LTP}. The concept also extends to operators with continuous spectra, generalizing the stable pseudospectral behavior of Hermitian operators.

For an unbroken $\mathcal{PT}$-symmetric operator $H$, \textit{LTP} reflects a locally bounded transformation between $\langle \cdot,\cdot \rangle$ and $\langle \cdot,\cdot\rangle_{\mathcal{CPT}}$. To see this, consider the Riesz projection corresponding to the $n$th eigenvalue of $H$, given by
$$
\mathcal{Q}_n=\frac{-1}{2\pi i }\int_{S_n} (H-zI)^{-1}\,\mathrm{d} z.
$$
Here, the contour $S_n$ wraps once around $\lambda_n$ and no other eigenvalues \cite[page 178]{kato2013perturbation}. Since each $\lambda_n$ is simple, the condition number of the eigenvalue $\lambda_n$ with eigenfunction $\phi_n$ is (simplifying with $\mathcal{PT}$-symmetry) \cite{aslanyan2000spectral}
\begin{align}\setlength\abovedisplayskip{4pt}\setlength\belowdisplayskip{4pt}
&\kappa_n=\|\mathcal{Q}_n\|=\frac{\langle \phi_n,\phi_n\rangle}{|\langle \mathcal{CPT}\phi_n,\phi_n\rangle|}=\frac{\langle \phi_n,\phi_n\rangle}{\langle \phi_n,\phi_n\rangle_{\mathcal{CPT}}},\text{ and}\label{eq:cond_def}\\
&\|(H-zI)^{-1}\|={\kappa_n}/{\dist(z,\lambda_n)}+\mathcal{O}(1),\text{ as }z\rightarrow \lambda_n,\label{eq:cond_meaning}
\end{align}
which relates $\kappa_n$ with the local behavior of $\mathrm{Sp}_\epsilon(H)$.

To compute $\mathrm{Sp}(H_\mathrm{B})$, we will determine constants $C_K$ in Eq. (\ref{eq:locally_trivial_pseudopsectrum}) for $H = H_{\mathrm{B}}$ and suitable sets $K$ that cover $\mathbb{C}$. Eigenvalues of $H_{\mathrm{B}}$ satisfy \cite{BenderRealSpectrainNon-HermitianHamiltonians}
$$\setlength\abovedisplayskip{4pt}\setlength\belowdisplayskip{4pt}
\lambda_n\!=\!\left[\frac{2\Gamma\left(\frac{11}{6}\right)(n-1/2)\sqrt{\pi}}{\sqrt{3}\Gamma\left(\frac{4}{3}\right)}\right]^{\frac{6}{5}}\!\!\!+\mathcal{O}\left(n^{-4/5}\right),\!\quad n=1,2,\ldots.
$$
There exists a constant $C_{\mathrm{B}}$ with \cite{henry2014spectralcubic}
\begin{equation}
\label{HB_asympt}\setlength\abovedisplayskip{4pt}\setlength\belowdisplayskip{4pt}
\lim_{n\rightarrow\infty}\kappa_n\exp(-n\pi/\sqrt{3})n^{1/4}=C_{\mathrm{B}}.
\end{equation}
Hence, $\lambda_n$ become exponentially unstable as $n \to \infty$.

\vspace{1mm}
\textit{Rectangular truncations to tackle \textbf{(R1)}.}---
We represent $H_{\mathrm{B}}$ using Hermite functions
$
u_m(x)=e^{-x^2/2}H_{m}(x),
$
for $m\in\mathbb{Z}_{\geq 0}$,
where
$
H_m(x)
$
is the $m$th Hermite polynomial. This yields a banded infinite matrix. Panel (a) of \cref{fig:pseudospectra} shows pseudospectra of a naïve $N\times N$ square truncation, which overestimates pseudospectra and produces spurious eigenvalues, due to neglected interactions with the full system---breaking $\mathcal{PT}$ symmetry.

\begin{figure}
\centering
\includegraphics[width=0.48\textwidth,trim=8mm 4mm 4mm 4mm,clip]{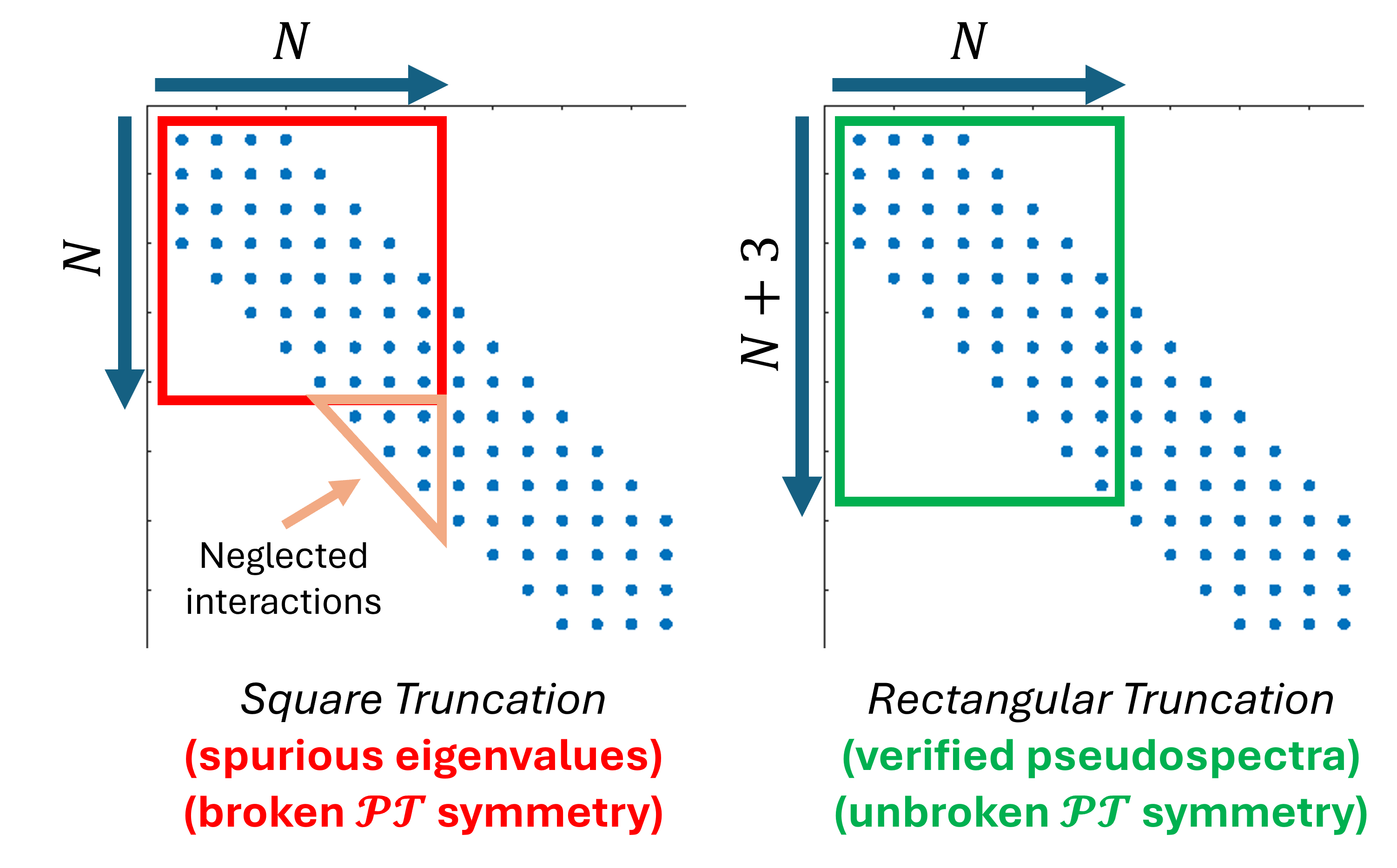}\vspace{-2mm}
\caption{Left: Square truncations of $H_{\mathrm{B}}$'s infinite matrix (blue dots indicate nonzero entries) omit important, physically relevant interactions. A direct check shows that the truncations have broken $\mathcal{PT}$-symmetry. Right: Rectangular truncations capture these terms, enabling verified spectra and pseudospectra. Properties of $H_\mathrm{B}$ (including unbroken $\mathcal{PT}$-symmetry) are preserved under rectangular truncations when we restrict to the subspace (range of $\mathcal{P}_N$).\vspace{-1mm}}
\label{fig:rect_truncation}
\end{figure}

We instead use \textit{rectangular} truncations \cite{PhysRevLett.122.250201} (see \cref{fig:rect_truncation}) to capture the missing interactions, preserve $\mathcal{PT}$-symmetry, overcome \textbf{(R1)}, and enable verification. Let $\mathcal{P}_N$ be the projection onto $\mathrm{span}\{u_0,\ldots,u_{N-1}\}$. Due to the banded structure of $H_{\mathrm{B}}$, we have $\mathrm{Sp}_\epsilon(H_{\mathrm{B}}\mathcal{P}_N)=\mathrm{Sp}_\epsilon(\mathcal{P}_{N+3}H_{\mathrm{B}}\mathcal{P}_N)$, where the latter is the pseudospectrum of a finite $(N+3)\times N$ matrix that can be computed. This approximation satisfies the bound $\mathrm{Sp}_\epsilon(H_{\mathrm{B}}\mathcal{P}_N)\subset\mathrm{Sp}_{\epsilon}(H_{\mathrm{B}})$ and converges as $N\rightarrow\infty$. Panels (b) and (c) of \cref{fig:pseudospectra} illustrate this convergence.

Crucially, rectangular truncations enable approximation of $\kappa_n$ in Eq. \eqref{eq:cond_def}. For $z \approx \lambda_n$, the right-singular vector of $\mathcal{P}_{N+3}(H_{\mathrm{B}}-zI)\mathcal{P}_N$ (associated with the smallest singular value) approximates $\phi_n$. We compute an initial approximation $\smash{\hat\phi_n\approx\phi_n}$ and verify that $\smash{\|\mathcal{P}_{N+3}(H_{\mathrm{B}}-zI)\mathcal{P}_N\hat\phi_n\|/\|\hat\phi_n\|}$ is small using interval arithmetic (S.M.). Panels (d) and (e) of \cref{fig:pseudospectra} show the resulting eigenfunction oscillations; panel (f) confirms Eq. (\ref{HB_asympt}), giving $C_{\mathrm{B}} \approx 0.176$.

\begin{figure}
\centering
\includegraphics[width=0.43\textwidth]{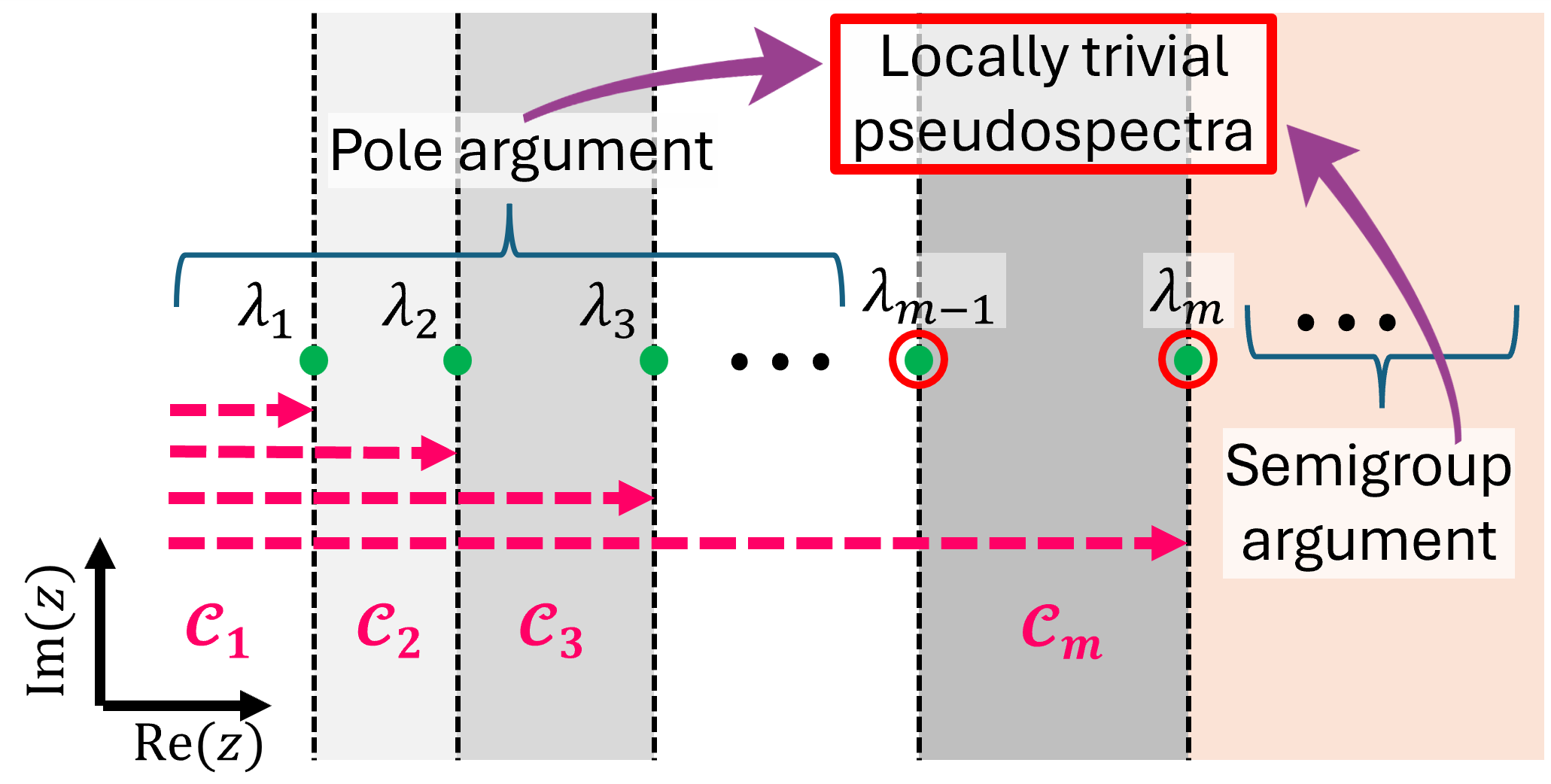}
\caption{The proof of Eq.~\eqref{eqmain_theorem} (S.M.) establishes \textit{LTP} on vertical strips (shaded gray) between eigenvalues. To the left of each strip, we bound eigenvalue contributions using pole bounds $\kappa_m/|\lambda_m-z|$ and truncated operators $\mathcal{C}_m$. To the right, we bound the semigroup generated by $-H_{\mathrm{B}}$. This approach extends to other operators with \textit{LTP}.}
\label{fig:thm_proof}
\end{figure}

\begin{figure}
\centering
\includegraphics[width=0.42\textwidth]{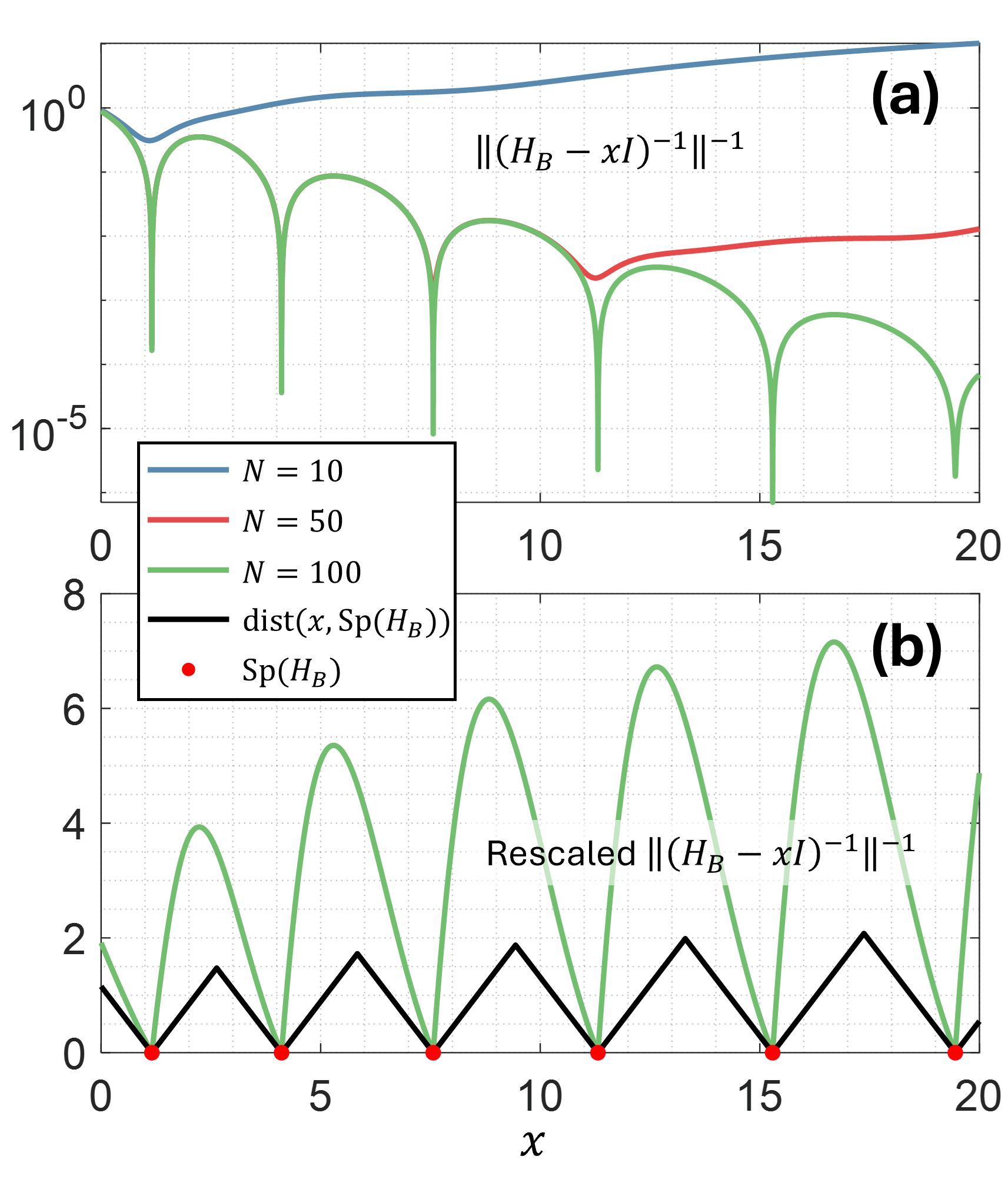}
\caption{\textbf{(a)}: Approximations of $\|(H_{\mathrm{B}}-xI)^{-1}\|^{-1}$ using $\mathrm{Sp}_\epsilon(\mathcal{P}_{N+3}H_{\mathrm{B}}\mathcal{P}_N)$, which converge from above as $N\rightarrow\infty$, free of spurious eigenvalues. \textbf{(b)}: Rescaled $\|(H_{\mathrm{B}}-xI)^{-1}\|^{-1}$, which are upper bounds on $\mathrm{dist}(x,\mathrm{Sp}(H_{\mathrm{B}}))$ used to compute $\mathrm{Sp}(H_{\mathrm{B}})$ with error bounds.}
\label{fig:rescale}
\end{figure}

\vspace{1mm}
\textit{Calculating the spectrum of $H_{\mathrm{B}}$.---}To compute the constants $C_K$ in Eq. (\ref{eq:locally_trivial_pseudopsectrum}) for $H = H_{\mathrm{B}}$ and address \textbf{(R2)} via \textit{LTP}, we truncate the eigenfunction expansion of $\mathcal{C}$ to $\mathcal{C}_m(x,y)=\sum_{n=1}^m \phi_n(x)\phi_n(y)$. From \cref{fig:pseudospectra}, $\kappa_n\leq \exp(\pi n/\sqrt{3})$, allowing a local bound on the difference between $\langle \cdot,\cdot \rangle$ and $\langle \cdot,\cdot\rangle_{\mathcal{CPT}}$. For $z\in\mathbb{C}\backslash\mathrm{Sp}(H_{\mathrm{B}})$, we bound contributions from eigenvalues $\lambda_n<\mathrm{Re}(z)$ using $\kappa_n/|\lambda_n-z|$, and for $\lambda_n>\mathrm{Re}(z)$, we bound the semigroup generated by $-H_{\mathrm{B}}$ on the corresponding eigenspaces. This requires asymptotic analysis of $\lambda_n$ to control the tail in $\mathcal{C}\approx\mathcal{C}_m$. The Hille--Yosida theorem \cite{pazy2012semigroups} then yields a bound on $\|(H_{\mathrm{B}}-zI)^{-1}\|$.

\cref{fig:thm_proof} illustrates this argument.
The end result (see S.M. for the full proof) is explicit bounds for \textit{LTP} of $H_{\mathrm{B}}$. Namely, if $\lambda_{m-1}< \mathrm{Re}(z)< \lambda_m$, then
\begin{equation}\label{eqmain_theorem}
\setlength\abovedisplayskip{4pt}\setlength\belowdisplayskip{4pt}
\|(H_{\mathrm{B}}{-}zI)^{-1}\|{\leq} \frac{\exp((m{-}1)\frac{\pi}{\sqrt{3}})}{|\lambda_{m-1}-z|}{+}\frac{\exp(m\frac{\pi}{\sqrt{3}})}{|\lambda_{m}-z|}{+}c_m,
\end{equation}
$c_m{=}{\exp\{(m{+}1)\frac{\pi}{\sqrt{3}}{+}[2 \Gamma(\frac{11}{6})\sqrt{\frac{\pi}{3}}/\Gamma(\frac{4}{3})(m{+}1)]^{\frac{6}{5}}\}}/{14}$.

Combining this with rectangular truncations (\cref{fig:rect_truncation}), we can now compute the eigenvalues of $H_{\mathrm{B}}$ with explicit error bounds. \cref{fig:rescale} (top) shows approximations of $\|(H_{\mathrm{B}}-zI)^{-1}\|^{-1}$ using rectangular truncations and $\mathrm{Sp}_\epsilon(\mathcal{P}_{N+3}H_{\mathrm{B}}\mathcal{P}_N)$. As $N\rightarrow\infty$, these converge from above and avoid spurious eigenvalues. From Eq. \eqref{eqmain_theorem},\begin{equation}
\label{inversion_formula}
\mathrm{dist}(z,\mathrm{Sp}(H_{\mathrm{B}}))\leq \frac{2\exp(m\frac{\pi}{\sqrt{3}})\|(H_{\mathrm{B}}-zI)^{-1}\|^{-1}}{1-c_m\|(H_{\mathrm{B}}-zI)^{-1}\|^{-1}}.
\end{equation}
\cref{fig:rescale} (bottom) shows the effect of this rescaling: the rescaled functions converge to an upper bound on $\mathrm{dist}(z,\mathrm{Sp}(H_{\mathrm{B}}))$, with local minima at the eigenvalues.

We verify eigenvalues in two steps. First, using interval bisection, we locate local minima $z_n$ of the rescaled $\|(H_{\mathrm{B}}-zI)^{-1}\|^{-1}$, obtaining candidate eigenfunctions $\smash{\hat{\phi}_n}$ from the corresponding right-singular vectors. Eq. (\ref{eqmain_theorem}) also predicts the required guard digits. Second, we use interval arithmetic to bound $\smash{\|(H_{\mathrm{B}}-z_nI)\hat{\phi}_n}\|$, giving a rigorous upper bound on $\|(H_{\mathrm{B}}-zI)^{-1}\|^{-1}$. Applying Eq. (\ref{eqmain_theorem}) again yields a bound on $|z_n - \lambda_n|$. Verified eigenvalues to 30 digits are shown in \cref{tab:cubic}. In this manner, we can also compute eigenvectors (S.M.).

\begin{table}[t!]
\centering
\caption{Eigenvalues of $H_{\mathrm{B}}$ with absolute error $<10^{-31}$.}
\begin{ruledtabular}
\begin{tabular}{rl}
\multicolumn{1}{r}{$n$} & \multicolumn{1}{c}{Verified eigenvalue $\lambda_n$}\\
\colrule
$1$ &    \hphantom{00}$1.156$ $267$ $071$ $988$ $113$ $293$ $799$ $219$ $177$ $999$ $9$ \\
$2$ &    \hphantom{00}$4.109$ $228$ $752$ $809$ $651$ $535$ $843$ $668$ $478$ $561$ $3$ \\
$3$ &    \hphantom{00}$7.562$ $273$ $854$ $978$ $828$ $041$ $351$ $809$ $110$ $631$ $4$ \\
$4$ &    \hphantom{0}$11.314$ $421$ $820$ $195$ $804$ $402$ $233$ $783$ $948$ $426$ $9$ \\
$5$ &    \hphantom{0}$15.291$ $553$ $750$ $392$ $532$ $388$ $181$ $630$ $791$ $751$ $9$ \\
$6$ &    \hphantom{0}$19.451$ $529$ $130$ $691$ $728$ $314$ $686$ $111$ $714$ $104$ $4$ \\
$7$ &    \hphantom{0}$23.766$ $740$ $435$ $485$ $819$ $131$ $558$ $025$ $968$ $789$ $9$ \\
$8$ &  	 \hphantom{0}$28.217$ $524$ $972$ $981$ $193$ $297$ $595$ $053$ $878$ $268$ $9$ \\
$9$ &    \hphantom{0}$32.789$ $082$ $781$ $862$ $957$ $492$ $447$ $371$ $485$ $046$ $3$ \\ 
$10$ &   \hphantom{0}$37.469$ $825$ $360$ $516$ $046$ $866$ $428$ $873$ $594$ $530$ $5$ \\ 
$100$ &             $627.694$ $712$ $248$ $436$ $511$ $352$ $673$ $702$ $901$ $153$ $6$ \\
\end{tabular}
\end{ruledtabular}
\label{tab:cubic}\vspace{-4mm}
\end{table}

\vspace{1mm}
\textit{Outlook and discussion.---}We proved that two key physical assumptions are necessary for computing spectra and introduced the concept of \textit{LTP}. The first assumption, concerning interaction terms, can be enforced via rectangular truncations of Hamiltonians and applies to systems with locally bounded interaction range, yielding sparse matrices. The second addresses non-Hermiticity and is realised through \textit{LTP}, which is broadly applicable and, in $\mathcal{PT}$-symmetric systems, serves as a local approximation of the hidden symmetry operator $\mathcal{C}$.

As an illustrative example, we computed the first verified eigenvalues of the imaginary cubic oscillator. To address \textbf{(R1)}, we used rectangular truncations, approximating the spectrum via singular values to avoid spurious modes. To address \textbf{(R2)}, we demonstrated \textit{LTP} through asymptotic and semigroup analysis.

The method extends to a wide range of operators, as illustrated by the physically motivated examples in \cref{fig:other_examples} (End Matter), which exhibit diverse spectral features and highlight the broad applicability of our approach and \textit{LTP}. For instance, beyond studying $H_{\mathrm{B}}$ on $L^2(\mathbb{R})$, we can consider operators on complex domains, such as those with potentials $x^2(ix)^\nu$ \cite{bender1999PT}, including the upside-down quartic (1st row). Additional examples include operators unbounded below (2nd row), with non-real spectra (3rd row), broken $\mathcal{PT}$-symmetry (4th row), and complex spectra with accumulation points (5th row). In each case, the Hermite function representation is replaced by another representation to overcome \textbf{(R1)}.

We can also deal with $H$ that have long-range interactions in two ways. First, by controlling the tail of rectangular truncations: e.g., if hopping terms decay exponentially, one can choose a truncation $\alpha(N)$ such that $\|(I-\mathcal{P}_{\alpha(N)})H\mathcal{P}_N\|\leq c_N$ for exponentially decaying $c_N$, allowing this error to be absorbed into the approximation. An example is given in the S.M.. Second, we may exploit additional information from the matrix coefficients of $H^*H$, as in the third row of \cref{fig:other_examples}. We outline the procedure in the End Matter.

We can also generalize Eq. (\ref{eq:locally_trivial_pseudopsectrum}) to handle non-simple eigenvalues by exploiting local algebraic relationships between $\|(H-zI)^{-1}\|^{-1}$ and $\mathrm{dist}(z,\mathrm{Sp}(H))$, with inversion formulas like Eq. (\ref{inversion_formula}). For instance, if an eigenvalue has algebraic multiplicity $p$, then a higher-order generalization of LTP near that eigenvalue yields
\begin{equation}\label{generalizedLTP}
\setlength\abovedisplayskip{2pt}\setlength\belowdisplayskip{2pt}
\mathrm{Sp}_{\epsilon}(H)\cap K\subset\{z\in\mathbb{C}:\mathrm{dist}(z,\mathrm{Sp}(H))\leq C_K\epsilon^{1/p}\}
\end{equation}
This is particularly useful for analyzing exceptional points, as illustrated in the End Matter.

Additionally, our semigroup approach applies when $\alpha H + \beta I$ generates a strongly continuous semigroup for appropriate scalars $\alpha, \beta$. Future work will also explore non-Hermitian operators with continuous spectra \cite{wen2020symmetric}.

Accurate eigenvalue computations are essential for the stability and efficiency of non-Hermitian systems, as small errors can cause significant deviations in physical behavior and compromise safety. Beyond the applications in the introduction, our method supports optimization in areas such as control for non-invasive imaging and wireless medical devices \cite{ye2021multi}. Moreover, gain–loss system phenomena extend beyond $\mathcal{PT}$-symmetry. Our approach to accurately compute eigenvalues and eigenvectors enables analysis of complex systems such as power grids, where balancing supply and demand poses challenges like intermittent generation and the dual roles of energy and information transmission \cite{powergrid, infopowergrid}.

We established a foundational link between quantum mechanics and computation by introducing \textit{LTP} and proving general limitations on spectral calculations. By understanding these limitations, we overcame them to compute verified eigenvalues of the imaginary cubic oscillator and other non-Hermitian operators. This work has both foundational and applied implications for non-Hermitian physics.

\footnotesize
\begin{acknowledgments}
The authors would like to thank the LMS for the Research in Pairs grant that facilitated the research in this project. M. C. was supported by Isaac Newton Trust Grant No. LEAG/929. We thank the referees for their helpful suggestions and comments.
\end{acknowledgments}

\nocite{colbrook3,kato1949upperH,zimmermann1995variational,barrenechea2014finite,press2007numerical,dondl2017bound,tai2006simpleness,davies2005semigroup}

\bibliographystyle{apsrev4-2}
\bibliography{cubic_oscillator_final_v3}

\normalsize

\section*{End Matter}\vspace{-3mm}

We now present additional non-Hermitian examples in \cref{fig:other_examples}, using \textit{LTP} to compute verified eigenvalues.

The first row of \cref{fig:other_examples} shows our method applied to an inverted quartic potential. Anharmonic oscillators are widely studied for testing techniques such as perturbation theory, Padé approximations, diagrammatic expansions, and variational or semiclassical methods. The considered model arises in quantum field theory and has been proposed to describe Higgs sector dynamics in the Standard Model \cite{Bender2001}. Here, boundary conditions are imposed not on the real line but on two Stokes sectors of angular width $\pi/3$, adjacent to and below the positive and negative real axes. We set $
x=-i\sqrt{1+2is}
$
for $s\in\mathbb{R}$ and discretize the transformed problem using Hermite functions and rectangular truncations of the resulting sparse infinite matrix. The eigenvalues are real and positive.

The second row of \cref{fig:other_examples} presents our method applied to an advection-diffusion operator on the periodic domain $[-\pi,\pi]$. This models the `explosive instability' of a rotating cylinder under gravity, neglecting inertial and capillary effects \cite{Benilov2003}, and belongs to a broader class of non-Hermitian fluid stability problems \cite{Fraternale2018}. We use a Fourier basis and apply rectangular truncations to the resulting sparse infinite matrix. The spectrum consists of real eigenvalues extending to $\pm\infty$.

The Papkovich--Fadle operator, arising in solid and fluid mechanics, concerns solutions $u$ of the biharmonic equation on a semi-infinite strip \cite{Papkovich1940,Fadle1940}. This operator attracts significant due to the non-normal nature of its eigenfunctions and expansions \cite{Shankar2007}. We approximate its eigenfunctions using a Chebyshev basis, orthonormalized to form an infinite basis representing the operator. Acting on block-2 functions over $[-1,1]$, the operator evolves solutions horizontally. Its matrix representation exhibits long-range interactions, so we work with matrices for $H$ and $H^*H$. Namely, letting $\sigma_{\mathrm{inf}}$ denote the smallest singular value, $\sigma_{\mathrm{inf}}((H-zI)\mathcal{P}_N)=\sqrt{\sigma_{\mathrm{inf}}(\mathcal{P}_N(H-zI)^*(H-zI)\mathcal{P}_N)}$. The eigenvalues occur in conjugate pairs with increasing imaginary parts.

Open systems often lead to non-Hermitian Hamiltonians due to the lack of energy conservation. Increasing the imaginary part of the potential beyond a critical threshold can trigger a transition from real to complex spectra (symmetry breaking). These phenomena are well studied and experimentally realized in optical systems \cite{regensburger2012parity,PhysRevLett.103.093902,ruter2010observation,feng2014single,hodaei2014parity,Makris2008}. We consider a lattice model on the discrete space $l^2(\mathbb{Z})$, incorporating an aperiodic complex potential superimposed on a harmonic potential. The aperiodicity arises from the incommensurability between the potential and the lattice. Our approach is versatile and capable of handling any type of potential.

In such systems, the spectra of truncated models are often highly sensitive to the imposed boundary conditions, exhibiting phenomena like the non-Hermitian skin effect \cite{Okuma2020}. The rectangular truncations employed here provide a natural set of boundary conditions that directly approximate the spectra of the infinite lattice model. This approach enables the algorithm to separate bulk states from edge states. 
In this example, some eigenvalues are real, while others form conjugate pairs. Increasing the value of $\alpha$ increases the imaginary parts of these eigenvalues. The fourth row of \cref{fig:other_examples} examines the broken $\mathcal{PT}$-symmetric regime.

\begin{figure}[t]
\centering
\includegraphics[width=0.48\textwidth]{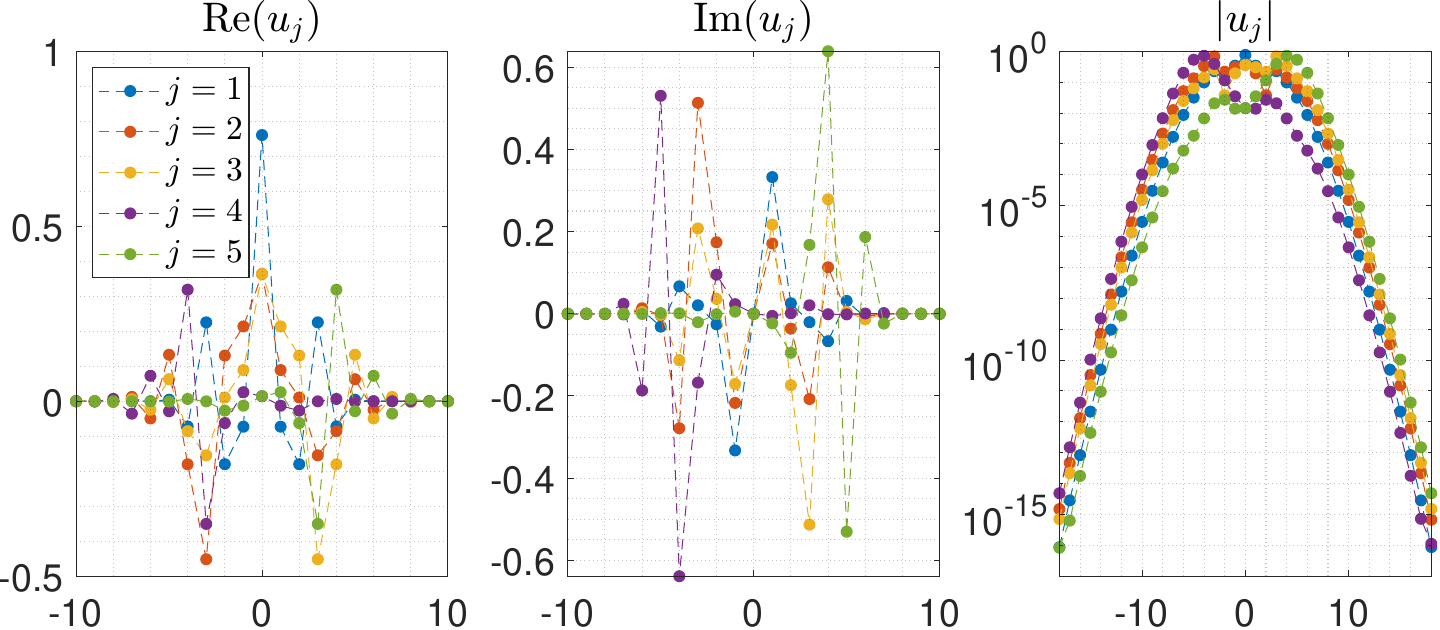}\vspace{-2mm}
\caption{First five eigenfunctions for the lattice operator (4th row of \cref{fig:other_examples}) computed with error bounded by $10^{-10}$.}\label{fig:lat_efuns}
\end{figure}

\begin{figure}[t]
\centering
\includegraphics[width=0.42\textwidth]{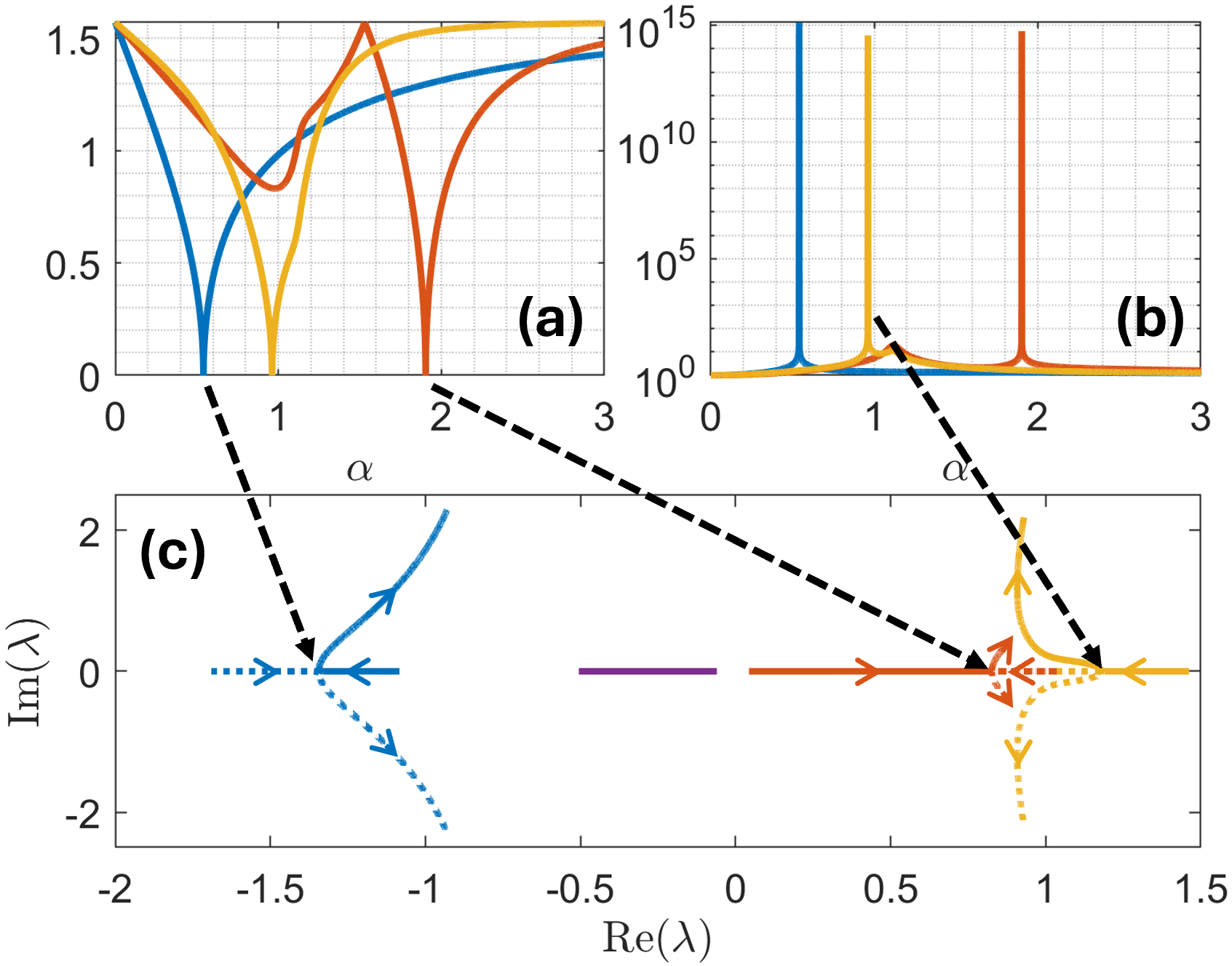}\vspace{-3mm}
\caption{Exceptional points for lattice operator (\cref{fig:other_examples}, 4th row). \textbf{(a)}: Subspace angle between eigenvectors computed with error bounded by $10^{-10}$. \textbf{(b)}: Blow-up of \textit{LTP} bound showing the need for higher-order \textit{LTP}. \textbf{(c)}: Trajectory of eigenvalues as $\alpha$ varies, with exceptional points shown by arrows. Dashed/solid lines show eigenvalue pairs that collide.}
\end{figure}

We can also compute eigenvectors with error bounds (see S.M. for details in the case of the imaginary cubic oscillator). \Cref{fig:lat_efuns} shows the first five eigenvectors computed for $\alpha = 2$. The $\mathcal{PT}$-symmetry enforces specific structures; for instance, the modes with $j = 2$ and $j = 3$ (corresponding to complex-conjugate eigenvalues) have real parts that are reflections of each other and imaginary parts that are negative reflections. As $\alpha$ increases from zero, exceptional points emerge when eigenvalues coalesce and become non-simple. At these points, the standard \textit{LTP} no longer holds. However, we can apply the generalized high-order version in \cref{generalizedLTP} with $p = 2$, corresponding to the algebraic multiplicity. This enables us to compute both eigenvalues and generalized eigenspaces with controlled error, as illustrated in \cref{fig:exeptional_points}. Notably, we can detect this phenomenon in the bulk, independent of boundary effects.

The final row of \cref{fig:other_examples} considers an operator modeling a laser resonator with Fresnel number $F$ and magnification $M$, tracing back to work by Siegman, Fox, Li, Landau, and others \cite{Siegman1986,Fox1963,Landau1976}. Laser operation involves two key eigenvalue problems: the Hermitian Schrödinger operator governing energy level differences (e.g., the neon frequency gap), and the optical cavity, which resonates at this frequency to produce coherence. The resulting non-Hermitian modes have attracted considerable interest \cite{Wright2020,Nixon2013}. Here, the eigenvalues are complex and spiral inward the accumulation point $0$.

\begin{figure}[t]
\label{fig:exeptional_points}\vspace{1mm}
\includegraphics[width=0.43\textwidth]{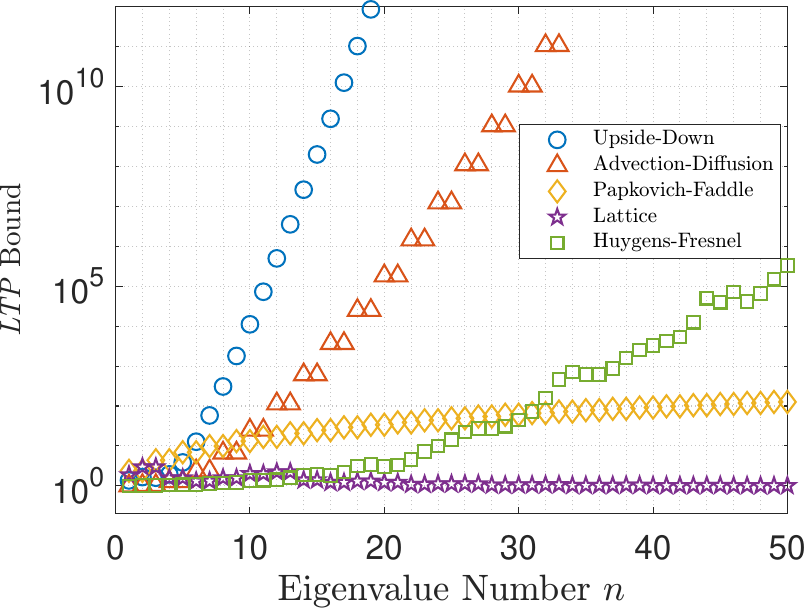}\vspace{-3mm}
\caption{The constants $C_K$ that locally bound the pseudospectra around each eigenvalue of the operators in \cref{fig:other_examples}.}\label{fig:conditions}
\end{figure}

To compute verified eigenvalues, \cref{fig:conditions} shows the \textit{LTP} bounds ($C_K$ constants) for the first 50 eigenvalues of each model, ordered by increasing absolute value (decreasing for the Huygens--Fresnel operator). For the upside-down potential, advection-diffusion, and Huygens--Fresnel operators, the bounds grow exponentially with index. For the Papkovich--Fadle operator, growth is algebraic. In the lattice model, the bounds converge to 1, as the unbounded real part of the potential dominates at high energies.

\begin{figure*}[t!]
\centering
\includegraphics[width=1\textwidth]{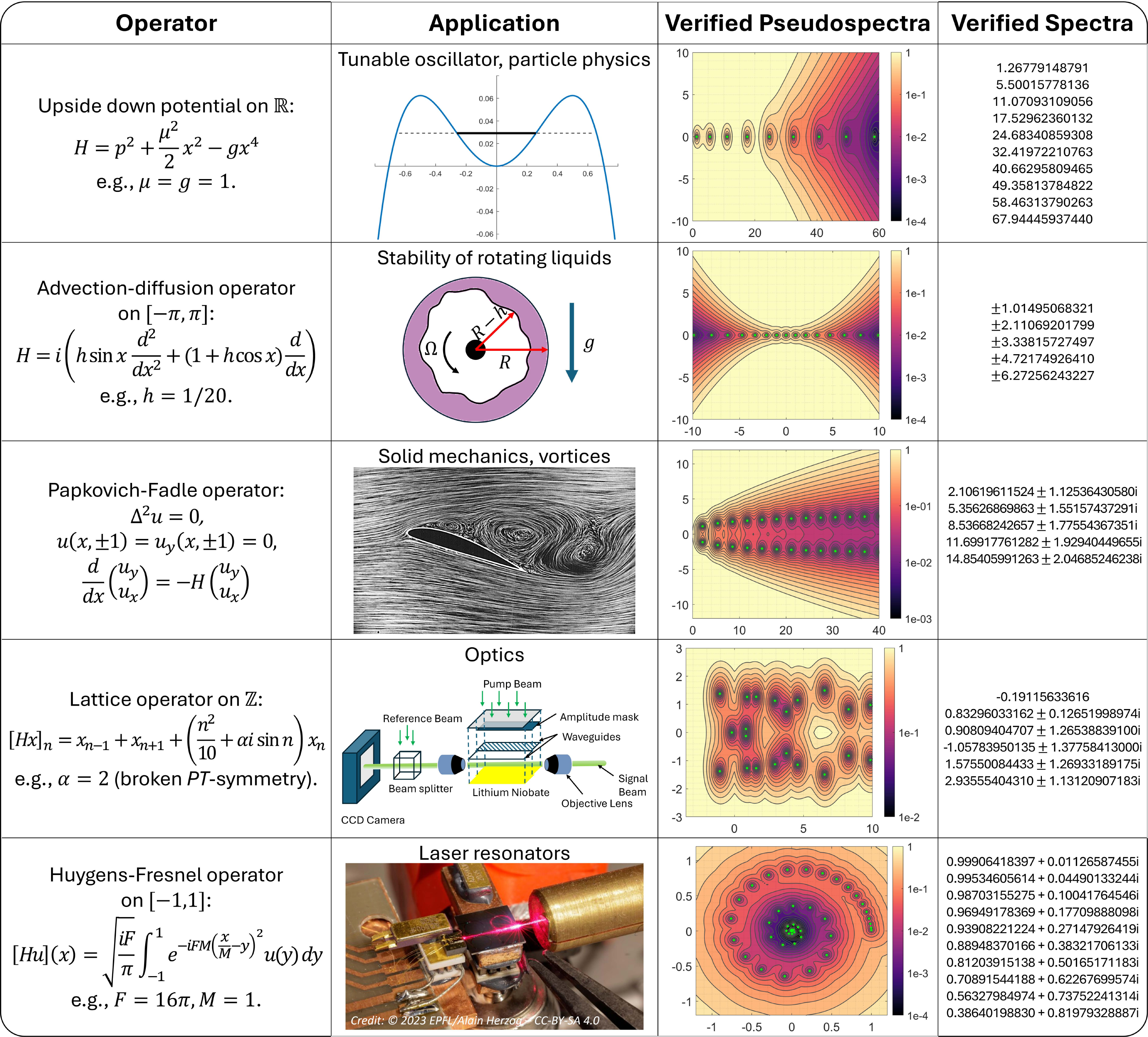}\vspace{-2mm}
\caption{Examples of other non-Hermitian operators for which we applied our method. In each case, we compute the first 10 non-trivial eigenvalues with verified absolute error $<10^{-10}$.}
\label{fig:other_examples}
\end{figure*}


\clearpage
\widetext

\begin{center}
\textbf{\large Supplemental Materials to\\``Computation and Verification of Spectra for Non-Hermitian Systems''\\\normalsize\vspace{4mm} Catherine Drysdale, Matthew Colbrook, Michael T. M. Woodley}
\end{center}

\setcounter{equation}{0}
\setcounter{figure}{0}
\setcounter{table}{0}
\setcounter{page}{1}
\makeatletter
\renewcommand{\theequation}{S\arabic{equation}}
\renewcommand{\thefigure}{S\arabic{figure}}
\renewcommand{\bibnumfmt}[1]{[S#1]}

\vspace{5mm}

We first discuss the two challenges, \textbf{(R1)} and \textbf{(R2)}, of computing spectra outlined in the main text. We then describe our method for calculating eigenvalues and eigenvectors of the imaginary cubic oscillator, which addresses these challenges. As outlined in our conclusion and demonstrated in our End Matter, the method is easily adaptable to other problems and generalizations.

\section{Fundamental limitations of computing spectra}

We now prove statements (A) and (B) from the main text. Before giving the proofs, we must define what an algorithm means. To make these impossibility results as strong as possible (which also makes the proofs simpler!), we use the setup of the Solvability Complexity Index (SCI) hierarchy \cite{Hansen_JAMS,ben2015can,colbrook2020PhD}. This allows us to classify the difficulty of problems using analysis techniques and examples from mathematical physics.

\subsection{Setting up the problem}

First, we will define a computational problem. Precision is needed since, for example, changing the information an algorithm can access dramatically affects the difficulty of a problem. If an algorithm can already access the eigenvalues of an operator, the problem becomes trivial. If, on the other hand, the information is too weak, the problem becomes too difficult. The following definition of a computational problem is deliberately general, designed to encompass all computational problems encountered in physics.

\begin{definition}[Computational problem]
\label{def:comp_prob}
The basic objects of a computational problem are:
\begin{itemize}[itemsep=0pt]
	\item A \textit{primary set}, $\Omega$, that describes the input class;
	\item A \textit{metric space} $(\mathcal{M},d)$;
	\item A \textit{problem function} $\Xi:\Omega\rightarrow\mathcal{M}$;
	\item An \textit{evaluation set}, $\Lambda$, of functions on $\Omega$.
\end{itemize}
The problem function $\Xi$ is the object we want to compute, with the notion of convergence captured by the metric space $(\mathcal{M},d)$. The evaluation set $\Lambda$ describes the information that algorithms can read. We refer to the collection $\{\Xi,\Omega,\mathcal{M},\Lambda\}$ as a \textit{computational problem}.
\end{definition}

\begin{example}
If $\Omega$ is a class of Hamiltonians, we could let $\Lambda$ be the evaluation of matrix coefficients with respect to some orthonormal basis. This is a typical setup throughout computations in quantum mechanics. We could consider the computation of the spectrum $\spec(H)$ for each $H\in\Omega$. If each $H$ is bounded, it is natural to let $(\mathcal{M},d)$ be the Hausdorff metric, which captures uniform convergence of compact subsets of $\mathbb{C}$:
$$
\dH(X,Y) = \max\left\{\sup_{x \in X} \inf_{y \in Y} |x-y|, \sup_{y \in Y} \inf_{x \in X} |x-y| \right\},\quad X,Y\in\MH.
$$
This means convergence without spurious eigenvalues or missing parts of the spectrum.
\end{example}

We can now define an algorithm, which is a function
$
\Gamma:\Omega\to \mathcal{M}
$
that, unlike the problem function $\Xi$, utilizes the evaluation set $\Lambda$ in some manner. For proving impossibility results, we use a deliberately general notion of algorithm.

\begin{definition}[General algorithm]
\label{def:Gen_alg}
Given a computational problem $\{\Xi,\Omega,\mathcal{M},\Lambda\}$, a {general algorithm} is a map $\Gamma:\Omega\to \mathcal{M}$ with the following property. For any $A\in\Omega$, there exists a non-empty finite subset of evaluations $\Lambda_\Gamma(A) \subset\Lambda$ such that if $B\in\Omega$ with $f(A)=f(B)$ for every $f\in\Lambda_\Gamma(A)$, then $\Lambda_\Gamma(A)=\Lambda_\Gamma(B)$ and $\Gamma(A)=\Gamma(B)$. In other words, 
the action of $\Gamma$ on $A$ can only depend on $\{f(A)\}_{f \in \Lambda_\Gamma(A)}$.
\end{definition}

\cref{def:Gen_alg} outlines the fundamental properties of any reasonable computational device. It states that $\Gamma$ can only use a finite amount of information, which it may select adaptively as it processes the input. The output of $\Gamma$ depends solely on the accessed information. Specifically, if $\Gamma$ encounters the same information for two different inputs, it must behave identically for both. A general algorithm has no restrictions on the operations allowed. It is more powerful than a digital or even analog computer and serves two main purposes. First, it significantly simplifies the process of proving lower bounds. The non-computability results we present are analytical and stem from the intrinsic non-computability of the problems themselves, not from the type of operations allowed being too restrictive. Second, the generality of \cref{def:Gen_alg} implies that a lower bound established for general algorithms also applies to any computational model. Moreover, our algorithms for computing spectra are executed using only arithmetic operations. Hence, we simultaneously derive the strongest possible lower and upper bounds on computational difficulty.

\begin{remark}
The SCI hierarchy is part of a broader program on classifying the difficulty of problems in mathematics and physics. We only need some of this hierarchy to prove the main text's results. The interested reader is pointed to \cite{colbrook2020PhD}, where various spectral problems from mathematical physics and other applications are classified.
\end{remark}

\subsection{The challenge of long-range interactions}

We now prove the first main result of the main text. The crux of the following proof (adapted from \cite{ben2015can}) is not knowing the corner entries in a large block, i.e.,  the lack of global information on long-range interactions in a Hamiltonian.

\begin{theorem}
\label{thm:SA_TWO_LIMITS_NEEDED}
Let $\OS$ denote the class of bounded self-adjoint operators acting on $l^2(\mathbb{N})$, $(\MH,\dH)$ be the collection of non-empty compact subsets of $\mathbb{R}$ equipped with the Hausdorff metric and $\Lambda$ be the evaluation of matrix entries (with respect to the canonical basis) of any $H\in\OS$. Then there is no sequence of general algorithms $\{\Gamma_n\}$ such that $\lim_{n\rightarrow\infty}\dH(\Gamma_n(H),\spec(H))=0$ for all $H\in\OS$.
\end{theorem}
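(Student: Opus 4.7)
The plan is to argue by contradiction via a diagonal construction, exploiting the finite-information nature of general algorithms (Definition~\ref{def:Gen_alg}) to ``hide'' long-range entries from any fixed algorithm in the hypothetical sequence. This is the SCI-hierarchy mechanism illustrated in Fig.~\ref{fig:proof_idea}(a) (cf.\ \cite{ben2015can,Hansen_JAMS}). The central gadget is the contrast between two families of $N\times N$ Hermitian matrices
\begin{equation*}
A_N = (e_1+e_N)(e_1+e_N)^{\!\top},\qquad B_N = e_1 e_1^{\!\top}+e_N e_N^{\!\top},
\end{equation*}
with spectra $\spec(A_N)=\{0,2\}$ and $\spec(B_N)=\{0,1\}$. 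The two matrices coincide on every entry except the long-range corner pair $(1,N),(N,1)$, which can be placed beyond any prescribed finite read range by taking $N$ large.

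Suppose for contradiction that $\Gamma_n(H)\to\spec(H)$ for every $H\in\OS$. I would build $H$ inductively as a direct sum of such gadget blocks placed at ever-growing positions $P_j<P_{j+1}$ of sizes $N_j$, together with an increasing subsequence $n_1<n_2<\cdots$. At stage $k$, given the partial operator $H^{(k-1)}$: first, use convergence to pick $n_k$ with $\dH(\Gamma_{n_k}(H^{(k-1)}),\spec(H^{(k-1)}))<1/4$; second, note that the read set $\Lambda_{\Gamma_{n_k}}(H^{(k-1)})$ is finite with maximum index $M_k$; third, modify $H^{(k-1)}$ only at entries of index $>M_k$---at odd stages insert a fresh $B_{N_j}$-block (introducing eigenvalue~$1$), and at even stages flip the two off-diagonal corner entries of the most recently inserted $B$-block, converting it into an $A$-block (thereby removing that eigenvalue~$1$). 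Because all modifications lie outside $\Lambda_{\Gamma_{n_k}}(H^{(k-1)})$, Definition~\ref{def:Gen_alg} gives $\Gamma_{n_k}(H^{(k)})=\Gamma_{n_k}(H^{(k-1)})$, and arranging that all later modifications remain beyond $M_k$ yields $\Gamma_{n_k}(H)=\Gamma_{n_k}(H^{(k-1)})$ for the limit $H=\lim_k H^{(k)}$, which is well defined and bounded self-adjoint with $\|H\|\leq 2$ since each entry stabilizes after finitely many stages.

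Every $B$-block is eventually flipped to an $A$-block, so in the limit $H$ is a direct sum of $A_{N_j}$ blocks and $\spec(H)=\{0,2\}$. However, at every even stage $k=2j$, the partial spectrum $\spec(H^{(k-1)})=\spec(H^{(2j-1)})$ contains $1$ (the $B_{N_j}$-block just inserted has not yet been flipped), so $\Gamma_{n_k}(H)=\Gamma_{n_k}(H^{(k-1)})$ lies within $1/4$ of $\{0,1,2\}$ and therefore contains a point within distance $1/4$ of~$1$. Since $\spec(H)=\{0,2\}$ has no point within distance $1/4$ of~$1$, it follows that $\dH(\Gamma_{n_k}(H),\spec(H))\geq 3/4$ for infinitely many $k$, contradicting $\Gamma_n(H)\to\spec(H)$ in the Hausdorff metric.

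\textbf{Main obstacle.} The delicate point is the scheduling of block sizes: at stage $j$ one must commit to $N_j$ before knowing the read range $M_{j+1}$ of the next algorithm on $H^{(j)}$, yet the corners of block $j$ must lie beyond $M_{j+1}$ to be available for flipping at stage $j{+}1$. I would resolve this apparent fixed-point issue by considering the ``infinite-size'' operator $H^{(j)}_\infty$ obtained by letting $N_j\to\infty$ (which sends the second diagonal $1$ of $B_{N_j}$ off to infinity): the operators $H^{(j)}_{N_j}$ and $H^{(j)}_\infty$ differ only at a single entry whose index tends to infinity, so applying Definition~\ref{def:Gen_alg} to the pair shows that the read set of $\Gamma_{n_{j+1}}$ on $H^{(j)}_{N_j}$ stabilizes to $\Lambda_{\Gamma_{n_{j+1}}}(H^{(j)}_\infty)$ for all sufficiently large $N_j$. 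Choosing $N_j$ beyond this stabilized range places the block's corners outside the reach of $\Gamma_{n_{j+1}}$ and completes the inductive step; verifying that $H$ has the claimed spectrum $\{0,2\}$ is then a routine direct-sum calculation.
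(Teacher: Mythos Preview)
Your argument is correct and rests on the same core gadget as the paper (your $A_N$ is the paper's block $H_N$), but the construction is more elaborate than necessary. The back-and-forth ``insert $B$, then flip to $A$'' schedule is precisely what creates the scheduling circularity you label the main obstacle; the paper avoids it entirely by never working with finite $B$-blocks. Instead of comparing $\Gamma_{n_k}$ to a finite partial operator $H^{(k-1)}$, the paper compares directly to the infinite comparison operator
\[
C_{k-1}=H_{l_1}\oplus\cdots\oplus H_{l_{k-1}}\oplus\diag(1,0,0,\ldots)\in\OS,
\]
which is exactly your limiting object $H^{(j)}_\infty$. Since $C_{k-1}$ does not depend on $l_k$, one may pick $n_k$ with $\dist(1,\Gamma_{n_k}(C_{k-1}))<1/2$ first, read off the finite bound $N(C_{k-1},n_k)$, and only then choose $l_k$ large enough that the corner entries of the next block $H_{l_k}$ lie beyond it; no fixed-point or stabilization argument is needed. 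In effect, your resolution of the obstacle rediscovers the paper's comparison operator as a special case, and the odd stages of your schedule (where $\spec(H^{(k-1)})=\{0,2\}$) contribute nothing to the contradiction and can simply be dropped.
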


\begin{proof}
Suppose for a contradiction that $\{\Gamma_{n}\}$ is a sequence of general tower of algorithms for $\{\spec,\OS,\MH,\Lambda\}$ such that $\lim_{n\rightarrow\infty}\dH(\Gamma_n(H),\spec(H))=0$ for all $H\in\OS$. Consider operators of the form
$$
H= \bigoplus_{r=1}^{\infty} H_{l_r},\quad l_r\in\mathbb{N}\backslash\{1\},\quad
H_k=\begin{pmatrix}
1& & & &1\\
 &0& & & \\
 & &\ddots& & \\
 & & &0& \\
1& & & &1\\
\end{pmatrix}
\in\mathbb{R}^{k\times k}.
$$
Then $H\in\OS$ with $\spec(H)=\spec(H_k)=\{0,2\}$. We will gain a contradiction by inductively selecting the integers $l_r$ so that $\Gamma_n(H)$ does not converge. For any $B\in\OS$ and $n\in\mathbb{N}$, define
$
N(B,n)=\max\{i,j : f_{i,j} \in \Lambda_{\Gamma_n}(B)\},
$
where $f_{i,j}(B)=\langle B e_j,e_i\rangle$. By \cref{def:Gen_alg}, we know that $N(B,n)$ is always finite. Moreover, if $B'\in\OS$ is such that $f_{i,j}(B')=f_{i,j}(B)$ for all $i,j\leq N(B,n)$, then $\Gamma_n(B)=\Gamma_n(B')$.

For $C=\diag(1,0,0,\ldots)\in\OS$, $\spec(C)=\{0,1\}$. It follows from the assumed convergence that there exists $n_0\in\mathbb{N}$ such that $\dist(1,\Gamma_{n_0}(C))<1/2$. If $B'\in\OS$ is such that $f_{i,j}(B')=f_{i,j}(B)$ for all $i,j\leq N(B,n)$, then $\Gamma_n(B)=\Gamma_n(B')$. We choose $l_1\geq 2$ so that $f_{i,j}(H)=f_{i,j}(C)$ for any $i,j\leq N(C,n_0)$ and hence $\Gamma_{n_0}(H)=\Gamma_{n_0}(C)$. In particular, this implies that $\dist(1,\Gamma_{n_0}(H))<1/2$. We now proceed inductively. Suppose that $l_1,\ldots,l_k$ have been chosen and let $C_k=H_{l_1} \oplus \cdots \oplus  H_{l_k} \oplus C\in\OS$. Clearly, $1\in\spec(C_k)$ and hence there exists $n_{k}>n_{k-1}$ so that $\dist(1,\Gamma_{n_{k}}(C_k))<1/2$. Arguing as before, we may select $l_{k+1}\geq 2$ so that $f_{i,j}(H)=f_{i,j}(C_k)$ for any $i,j\leq N(C_k,n_k)$ and hence so that $\Gamma_{n_k}(H)=\Gamma_{n_k}(C)$. In particular, this implies that $\dist(1,\Gamma_{n_k}(H))<1/2$. This holds for all $k\in\mathbb{N}$. But this contradicts $\lim_{n\rightarrow\infty}\Gamma_n(H)=\{0,2\}$.
\end{proof}

\begin{remark}[Extension to approximate eigenvectors]
The proof immediately extends to computing approximate eigenvectors. The only change is that instead of approximating the spectral point $1$, the supposed algorithms approximate the corresponding eigenvector of $C$.
\end{remark}

\subsection{The challenge of non-Hermiticity}

We now consider the class of tridiagonal (possibly non-Hermitian) operators. The following example first examines the (in)stability properties of eigenvalues of finite matrices as their dimensions increase.

\begin{example}[Instability of spectrum of Jordan blocks]
\label{exam:instab_jordan}
Consider the $k\times k$ Jordan block and its resolvent:
\begin{equation}
\label{eq:jordan_block}
J_k=
\begin{pmatrix}
0& 1& & \\
  &\ddots&\ddots & \\
  & &0& 1\\
 & & &0\\
\end{pmatrix},\quad (J_k-zI)^{-1}=
\begin{pmatrix}
\frac{-1}{z}& \frac{-1}{z^2} &\hdots &\frac{-1}{z^k}\\
  &\ddots&\ddots & \vdots\\
  & &\frac{-1}{z}& \frac{-1}{z^2}\\
 & & &\frac{-1}{z}\\
\end{pmatrix},
\end{equation}
where $(J_k-zI)^{-1}$ exists for $z\neq 0$. If we fix $z\in\mathbb{C}$ with $0<|z|<1$, then
$$
\|(J_k-zI)^{-1}\|\geq |\langle(J_k-zI)^{-1}e_k,e_1\rangle|=|z|^{-k}.
$$
Hence, the resolvent norm increases exponentially with the dimension $k$. Due to the characterization of pseudospectra as a union of spectra of perturbed operators, this exponential increase indicates a severe instability of the spectrum.
\end{example}

In infinite dimensions, this instability is used to show that computing spectra of infinite tridiagonal operators is impossible with a sequence of algorithms. The proof of the following theorem is adapted from \cite{ben2015can}.

\begin{theorem}
\label{thm:BANDED_TWO_LIMITS_NEEDED}
Let $\OT$ denote the class of bounded operators acting on $l^2(\mathbb{N})$ that are tridiagonal with respect to the canonical basis. Let $(\MH,\dH)$ be the collection of non-empty compact subsets of $\mathbb{C}$ equipped with the Hausdorff metric. Let $\Lambda$ be the evaluation of matrix entries, with respect to the canonical basis, of any $H\in\OT$. Then there is no sequence of general algorithms $\{\Gamma_n\}$ such that $\lim_{n\rightarrow\infty}\dH(\Gamma_n(H),\spec(H))=0$ for all $H\in\OT$.
\end{theorem}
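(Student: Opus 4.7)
The plan is to adapt the diagonal-argument structure used in the proof of \cref{thm:SA_TWO_LIMITS_NEEDED}, replacing the Hermitian long-range-corner construction with direct sums of Jordan blocks that exploit the spectral instability of \cref{exam:instab_jordan}. The starting observation is that $H(\vec l)=\bigoplus_{r\geq1}J_{l_r}\in\OT$ (its matrix is upper bidiagonal with $0/1$ entries on the super-diagonal) and satisfies a sharp dichotomy: if $\sup_r l_r<\infty$, then $\spec(H(\vec l))=\{0\}$, because the blockwise resolvents $\|(J_{l_r}-zI)^{-1}\|$ are uniformly bounded for $z$ away from $0$; whereas if $l_r\to\infty$, then the estimate $\|(J_{l_r}-zI)^{-1}\|\geq|z|^{-l_r}$ from \cref{exam:instab_jordan} forces every $z$ with $0<|z|<1$ into $\spec(H(\vec l))$, and since $\|H(\vec l)\|=1$ the spectrum is exactly the closed unit disk $\overline{D(0,1)}$. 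Thus the presence or absence of arbitrarily large Jordan blocks in the tail flips the spectrum between two sets whose Hausdorff distance is $1$.

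Assume for contradiction that $\{\Gamma_n\}$ is a sequence of general algorithms with $\dH(\Gamma_n(H),\spec(H))\to0$ for every $H\in\OT$. I would inductively construct an operator $H=\bigoplus_{r\geq1}J_{l_r}$ with $l_r$ unbounded, together with indices $n_1<n_2<\cdots$, such that $\dH(\Gamma_{n_k}(H),\{0\})<1/4$ for every $k$. At step $k$, the first $M_{k-1}$ matrix entries of $H$ have already been frozen as a finite direct sum of Jordan blocks of bounded size. I form the auxiliary operator $B^{(k)}\in\OT$ extending this prefix by infinitely many copies of $J_2$; by the first case of the dichotomy, $\spec(B^{(k)})=\{0\}$, so by assumption there exists $n_k>n_{k-1}$ with $\dH(\Gamma_{n_k}(B^{(k)}),\{0\})<1/4$. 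By \cref{def:Gen_alg}, $\Gamma_{n_k}$ reads only finitely many matrix entries of $B^{(k)}$; let $N_k$ denote the largest inspected index. I then extend $H$ past $M_{k-1}$ by first appending copies of $J_2$ until the occupied indices exceed $N_k$, and then inserting one further Jordan block $J_{L_k}$ with $L_k\geq k$, updating $M_k$ accordingly.

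By construction, the matrix entries of $H$ and $B^{(k)}$ agree on every index queried by $\Gamma_{n_k}$, so \cref{def:Gen_alg} yields $\Gamma_{n_k}(H)=\Gamma_{n_k}(B^{(k)})$ and hence $\dH(\Gamma_{n_k}(H),\{0\})<1/4$ for all $k$. On the other hand, the inserted block sizes satisfy $L_k\geq k$, so $\sup_r l_r=\infty$ and the second case of the dichotomy gives $\spec(H)=\overline{D(0,1)}$. Since $\dH(\{0\},\overline{D(0,1)})=1$, the triangle inequality forces $\dH(\Gamma_{n_k}(H),\spec(H))\geq 3/4$ for all $k$, contradicting the assumed convergence. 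The main obstacle I anticipate is cleanly establishing the Jordan-block spectral dichotomy, in particular verifying that the blockwise lower bound $|z|^{-l_r}$ implies $z\in\spec(H(\vec l))$ for the full $l^2(\mathbb{N})$ direct sum when $l_r\to\infty$; once this is in hand, the diagonal construction is a direct structural analog of the proof of \cref{thm:SA_TWO_LIMITS_NEEDED}, with the only new ingredient being the use of pseudospectral blow-up (non-Hermiticity, \textbf{(R2)}) in place of long-range corner entries (\textbf{(R1)}).
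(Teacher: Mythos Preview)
Your proposal is correct and follows essentially the same strategy as the paper's proof: both build $H$ as a direct sum of Jordan blocks, exploit the dichotomy $\spec(H)=\{0\}$ versus $\spec(H)=\overline{D(0,1)}$ depending on whether the block sizes are bounded, and run a diagonal argument against auxiliary operators with bounded blocks. The only cosmetic differences are that the paper pads the tail at step $k$ with copies of $J_{k+2}$ (so the block sizes themselves march to infinity) and measures failure via $\dist(1,\Gamma_{n_k}(H))>1/2$, whereas you pad with $J_2$'s and insert a single large block $J_{L_k}$, measuring failure via $\dH(\Gamma_{n_k}(H),\{0\})<1/4$; both variants yield the same contradiction.
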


\begin{proof}
Suppose for a contradiction that $\{\Gamma_{n}\}$ is a sequence of general tower of algorithms for $\{\spec,\OT,\MH,\Lambda\}$ such that $\lim_{n\rightarrow\infty}\dH(\Gamma_n(H),\spec(H))=0$ for all $H\in\OT$. Consider operators of the form
$$
H= \bigoplus_{r=1}^{\infty} J_{l_r},\quad l_r\in\mathbb{N}\backslash\{1\},
$$
where the $J_{l_r}$ are Jordan blocks defined in \cref{eq:jordan_block}. If the sequence $\{l_r\}\subset\mathbb{N}\backslash\{1\}$ is bounded, then $\spec(H)=\{0\}$. Otherwise, owing to the fact that $\lim_{k\rightarrow\infty}\|(J_k-zI)^{-1}\|=\infty$ for $|z|\leq 1$, we see that $\spec(H)=\{z\in\mathbb{C}: |z|\leq1\}$. We will gain a contradiction by choosing the integers $l_r$ so that $\Gamma_n(H)$ does not converge.

We take $N(\cdot,n)$ as in the proof of \cref{thm:SA_TWO_LIMITS_NEEDED}. Set $B_m=\bigoplus_{r=1}^{\infty} J_{m}\in\OT$ for any $m\in\mathbb{N}$. Note that $\spec(B_m)=\{0\}$ for any $m$. Hence there exists $n_0\in\mathbb{N}$ such that $\dist(1,\Gamma_{n_0}(B_2))>1/2$. We may pick $l_r=2$ for $r=1,\ldots, N_1$ and some $N_1$ so that $f_{i,j}(H)=f_{i,j}(B_2)$ for any $i,j\leq N(B_2,n_0)$ and hence $\Gamma_{n_0}(H)=\Gamma_{n_0}(B_2)$. In particular, this implies that $\dist(1,\Gamma_{n_0}(H))>1/2$. We now proceed inductively. Suppose that $l_r$ have been chosen for $r=1,\ldots, N_k$ and let $C_k=J_{l_1} \oplus \cdots \oplus  J_{l_{N_k}} \oplus B_{k+2}\in\OT$. Clearly, $\spec(C_k)=\{0\}$ and hence there exists $n_{k}>n_{k-1}$ so that $\dist(1,\Gamma_{n_{k}}(C_k))>1/2$. Arguing as before, we may select $l_r=k+2$ for $r=N_k+1,\ldots, N_{k+1}$ and some $N_{k+1}$ so that $f_{i,j}(H)=f_{i,j}(C_k)$ for any $i,j\leq N(C_k,n_k)$ and hence so that $\Gamma_{n_k}(H)=\Gamma_{n_k}(C_k)$. This implies that $\dist(1,\Gamma_{n_k}(H))>1/2$. Hence $\dist(1,\Gamma_{n_k}(H))>1/2$ for all $k\in\mathbb{N}$. But this contradicts $\lim_{n\rightarrow\infty}\Gamma_n(H)=\{z\in\mathbb{C}: |z|\leq1\}$.
\end{proof}

\begin{remark}The impossibility result of \cref{thm:BANDED_TWO_LIMITS_NEEDED} is a stronger statement than numerical instability. It holds even if we permit exact arithmetic without any round-off errors. The proof also holds when we replace the Jordan blocks with any family of tridiagonal finite matrices whose resolvent blows up as the dimension increases.
\end{remark}

\begin{remark}[Extension to approximate eigenvectors]
The proof immediately extends to computing approximate eigenvectors. However, now we use the fact that $1$ is in the spectrum of $H$, but the supposed algorithms do not approximate associated approximate eigenvectors.
\end{remark}

\section{Verified eigenvalues and eigenfunctions of the imaginary cubic oscillator}

\subsection{Rectangular truncations and convergence to eigenpairs}

To compute spectral properties of $H_{\mathrm{B}}$, we use Hermite functions $
u_m(x)=e^{-x^2/2}H_{m}(x)/{\sqrt{2^{m}m!\sqrt{\pi}}}, m\in\mathbb{Z}_{\geq 0},
$
where
$
H_m(x)=(-1)^me^{x^2}{\mathrm{d}^m}e^{-x^2}/{\mathrm{d}x^m}.
$
Recall that $\mathcal{P}_N$ denotes the projection onto $\mathrm{span}\{u_0,\ldots,u_{N-1}\}$ and that the matrix representation of $H_{\mathrm{B}}$ with respect to Hermite functions is banded. This follows from the recursion relations for derivatives and multiplication by $x$ of Hermite functions. Explicitly, we have:
\begin{align*}
[H_{\mathrm{B}}u_m](x)&=-\frac{\sqrt{m(m-1)}}{2}u_{m-2}(x) +\frac{m^2+(m+1)^2}{2}u_m(x)-\frac{\sqrt{(m+1)(m+2)}}{2}u_{m+2}(x)\\
&\quad+i\left[\frac{\sqrt{m(m-1)(m-2)}}{2\sqrt{2}}u_{m-3}(x)+\left(\frac{(m-1)\sqrt{m}}{2\sqrt{2}}+\frac{m^2+(m+1)^2}{2}\sqrt{\frac{m}{2}}\right)u_{m-1}(x)\right]\\
&\quad\quad +i\left[\frac{\sqrt{(m+1)(m+2)(m+3)}}{2\sqrt{2}}u_{m+3}(x)+\left(\frac{(m+2)\sqrt{m+1}}{2\sqrt{2}}+\frac{m^2+(m+1)^2}{2}\sqrt{\frac{m+1}{2}}\right)u_{m+1}(x)\right].
\end{align*}
For an operator $T$ with domain $\mathcal{D}(T)$, we let
$$
\sigma_{\mathrm{inf}}(T)=\inf\{\|Tx\|:x\in\mathcal{D}(T),\|x\|=1\}.
$$
This quantity is known as the \textit{injection modulus} and is the operator generalization of the smallest singular value. It is convenient to use the notation $\sigma_{\mathrm{inf}}$ since it applies to both finite and infinite-dimensional operators. Our first result shows that we can compute the function $z\mapsto\|(H_{\mathrm{B}}-zI)^{-1}\|^{-1}$ by computing smallest singular values of \textit{finite rectangular} matrices. In other words, the eigenfunctions of the harmonic oscillator provide a suitable basis for computing the spectral properties of the imaginary cubic oscillator.

\begin{proposition}[Rectangular truncations to compute resolvent norm]
\label{gamma_n_conv}
Let $\gamma_N(z)=\sigma_{\mathrm{inf}}(\mathcal{P}_{N+3}(H_{\mathrm{B}}-zI)\mathcal{P}_N)$. Then $\gamma_N(z)\geq \|(H_{\mathrm{B}}-zI)^{-1}\|^{-1}$ and $\gamma_N(z)$ converges to $\|(H_{\mathrm{B}}-zI)^{-1}\|^{-1}$ uniformly on compact subsets of $\mathbb{C}$ as $N\rightarrow\infty$.
\end{proposition}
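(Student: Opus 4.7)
The plan is three-fold: reduce the rectangular singular value to an infimum over test vectors in $\ran(\mathcal{P}_N)$ using the banded structure of $H_{\mathrm{B}}$, deduce pointwise convergence from a core property of the Hermite basis, and upgrade to uniform convergence on compacts via Dini's theorem.

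First, I would exploit that $H_{\mathrm{B}}$ has bandwidth $3$ in the Hermite basis: for every $v \in \ran(\mathcal{P}_N)$, $H_{\mathrm{B}} v \in \ran(\mathcal{P}_{N+3})$, so $\mathcal{P}_{N+3}(H_{\mathrm{B}}-zI)v = (H_{\mathrm{B}}-zI)v$. This yields the clean identity
\begin{equation*}
\gamma_N(z) = \inf\bigl\{\|(H_{\mathrm{B}}-zI)v\| : v \in \ran(\mathcal{P}_N),\; \|v\|=1\bigr\}.
\end{equation*}
Since $\ran(\mathcal{P}_N) \subset \dom(H_{\mathrm{B}})$ and grows with $N$, this immediately gives the monotonicity $\gamma_{N+1}(z) \leq \gamma_N(z)$ and the lower bound $\gamma_N(z) \geq \sigma_{\mathrm{inf}}(H_{\mathrm{B}}-zI) = \|(H_{\mathrm{B}}-zI)^{-1}\|^{-1}$ (with the usual convention $\|(H_{\mathrm{B}}-zI)^{-1}\|^{-1}=0$ on $\mathrm{Sp}(H_{\mathrm{B}})$, where the inequality is trivial).

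Second, for pointwise convergence I would invoke the fact that finite Hermite combinations form a core for $H_{\mathrm{B}}$. Given $z \in \mathbb{C}$ and $\varepsilon > 0$, pick $y \in \dom(H_{\mathrm{B}})$ with $\|y\|=1$ and $\|(H_{\mathrm{B}}-zI)y\| \leq \|(H_{\mathrm{B}}-zI)^{-1}\|^{-1} + \varepsilon$ (or take $y$ to be an eigenfunction $\phi_n$ when $z = \lambda_n$). The core property furnishes $y_M \in \ran(\mathcal{P}_M)$ with $y_M \to y$ and $H_{\mathrm{B}} y_M \to H_{\mathrm{B}} y$; the infimum formula then bounds $\gamma_M(z) \leq \|(H_{\mathrm{B}}-zI)y_M\|/\|y_M\|$, which tends to $\|(H_{\mathrm{B}}-zI)y\|$. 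Combining with the lower bound and letting $\varepsilon \to 0$ gives $\gamma_N(z) \to \|(H_{\mathrm{B}}-zI)^{-1}\|^{-1}$.

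Third, to upgrade to uniform-on-compacts convergence I would observe that each $\gamma_N$ is Lipschitz in $z$, since the smallest singular value is $1$-Lipschitz in the matrix entries and the matrix $\mathcal{P}_{N+3}(H_{\mathrm{B}}-zI)\mathcal{P}_N$ is affine in $z$. The limit $z \mapsto \|(H_{\mathrm{B}}-zI)^{-1}\|^{-1}$ is likewise continuous on $\mathbb{C}$, being continuous on the resolvent set and tending to $0$ at $\mathrm{Sp}(H_{\mathrm{B}})$ as the resolvent norm blows up. Dini's theorem, applied to the monotone non-increasing sequence $(\gamma_N)$ of continuous functions with continuous pointwise limit on any compact $K \subset \mathbb{C}$, then yields the desired uniform convergence. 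The principal obstacle is rigorously justifying the core property of finite Hermite combinations for $H_{\mathrm{B}}$: the natural route is to establish that $H_{\mathrm{B}}$ is essentially m-accretive on $\mathcal{S}(\mathbb{R})$---via resolvent estimates for the harmonic oscillator and a relative-perturbation argument for $ix^3$, or by invoking the holomorphic semigroup generated by $-H_{\mathrm{B}}$ that is exploited elsewhere in this work---and combine this with the classical density of Hermite partial sums in $\mathcal{S}(\mathbb{R})$ in the Schwartz topology, which dominates the graph norm.
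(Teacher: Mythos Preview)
Your proposal is correct and takes essentially the same approach as the paper: both use the bandwidth-$3$ structure to identify $\gamma_N(z)$ with $\sigma_{\mathrm{inf}}((H_{\mathrm{B}}-zI)\mathcal{P}_N)$, invoke the core property of the Hermite span for convergence of the injection moduli, and obtain the lower bound by restricting the infimum to a subset. The only cosmetic differences are that the paper cites an external theorem for the uniform-on-compacts convergence (where you supply a self-contained Dini argument) and makes explicit, via discreteness of $\mathrm{Sp}(H_{\mathrm{B}})$, the identity $\sigma_{\mathrm{inf}}(H_{\mathrm{B}}-zI)=\|(H_{\mathrm{B}}-zI)^{-1}\|^{-1}$ that you invoke implicitly.
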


\begin{proof}
Since compactly supported smooth functions form a core of $H_{\mathrm{B}}$, the linear span of Hermite functions forms a core for $H_{\mathrm{B}}$ \cite[Proposition 7.1]{colbrook3}. It follows that \cite[Theorem 6.7]{colbrook3}
$
\lim_{N\rightarrow\infty}\sigma_{\mathrm{inf}}((H_{\mathrm{B}}-zI)\mathcal{P}_N)=\sigma_{\mathrm{inf}}(H_{\mathrm{B}}-zI),
$
with convergence from above, which is uniform on compact subsets of $\mathbb{C}$. Since the spectrum of $H_{\mathrm{B}}$ is discrete, $\sigma_{\mathrm{inf}}(H_{\mathrm{B}}-zI)=\sigma_{\mathrm{inf}}(H_{\mathrm{B}}^*-\overline{z}I)$ and hence \cite[Lemma 6.4]{colbrook3}
$$
\|(H_{\mathrm{B}}-zI)^{-1}\|^{-1}=\min\{\sigma_{\mathrm{inf}}(H_{\mathrm{B}}-zI),\sigma_{\mathrm{inf}}(H_{\mathrm{B}}^*-\overline{z}I)\}=\sigma_{\mathrm{inf}}(H_{\mathrm{B}}-zI).
$$
The statement of the proposition follows by noting that $\gamma_N(z)=\sigma_{\mathrm{inf}}((H_{\mathrm{B}}-zI)\mathcal{P}_N)$ from the bandedness of the matrix representation of $H_{\mathrm{B}}$ using Hermite functions.
\end{proof}

There is a rich literature on using injection moduli to compute spectra of Hermitian operators \cite{kato1949upperH,zimmermann1995variational,barrenechea2014finite} going back to the work of Kato. The most general form of convergence is achieved by the algorithm \texttt{CompSpec} of Colbrook et al. \cite{PhysRevLett.122.250201,colbrook3}, which converges for any Hermitian operator (with no assumptions on essential spectra), even those with complicated fractal spectra. In this paper, the fundamental challenge is the non-Hermiticity of the operator $H_{\mathrm{B}}$.

We now show how finding \textit{local minimizers} of $\gamma_N$ allows us to compute the eigenfunctions and eigenvalues of $H_{\mathrm{B}}$. Suppose an interval $[a,b]\subset\mathbb{R}$ contains a desired eigenvalue $\lambda_n$ (and no other eigenvalues). In the case of $H_{\mathrm{B}}$, we can find such intervals using asymptotic formulas. Let $z_N$ be a global minimizer of the continuous function $\gamma_N$ on this interval, and let $f_N$ denote the corresponding right-singular vector of $\mathcal{P}_{N+3}(H_{\mathrm{B}}-z_NI)\mathcal{P}_N$ corresponding to the smallest singular value. Let $E_n$ denote the eigenspace $\mathrm{span}\{\phi_n\}$ and define the angle between two subspaces $\mathrm{span}\{u\},\mathrm{span}\{w\}$ as 
$$
\angle(\mathrm{span}\{u\},\mathrm{span}\{w\})=\min\left\{\cos^{-1}\left(\frac{\langle v_1,v_2\rangle}{\|v_1\|\|v_2\|}\right):v_1\in\mathrm{span}\{u\},v_2\in\mathrm{span}\{w\}\right\}.
$$
The subspace angle measures the distance (or angle) between subspaces and is the standard metric in measuring convergence to eigenspaces. The following theorem shows that this minimization process (to find $z_N$ and $f_N$) converges to $\lambda_n$ and the eigenspace $E_n$ as $N\rightarrow\infty$. In particular, the monotonicity in \cref{gamma_n_conv} ensures that we converge without spurious eigenvalues. This monotonicity does not hold for square truncations of operators -- it is the rectangular truncation that makes this possible.

\begin{theorem}[Convergence to eigenpairs]\label{SM_conv_thm}
Given the above, $\lim_{N\rightarrow\infty}z_N=\lambda_n$ and $\lim_{N\rightarrow\infty}\angle(\mathrm{span}\{f_N\},E_n)=0$.
\end{theorem}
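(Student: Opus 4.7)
The plan is to combine three ingredients: the sandwich estimate and uniform convergence of $\gamma_N$ from \cref{gamma_n_conv}, the \emph{LTP} property for $H_{\mathrm{B}}$ just established via \eqref{eqmain_theorem}--\eqref{inversion_formula}, and the fact that $\lambda_n$ is an isolated, simple eigenvalue (which supplies a bounded rank-one Riesz projection $\mathcal{Q}_n$ onto $E_n$).

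First I would prove $z_N\to\lambda_n$. Since $\lambda_n\in\mathrm{Sp}(H_{\mathrm{B}})$, we have $\|(H_{\mathrm{B}}-\lambda_n I)^{-1}\|^{-1}=0$, and \cref{gamma_n_conv} applied on the compact set $[a,b]$ gives $\gamma_N(\lambda_n)\to 0$. Because $z_N$ minimises $\gamma_N$ over $[a,b]\ni\lambda_n$, the sandwich inequality yields
\[
0\le\|(H_{\mathrm{B}}-z_N I)^{-1}\|^{-1}\le\gamma_N(z_N)\le\gamma_N(\lambda_n)\longrightarrow 0.
\]
The inversion formula \eqref{inversion_formula}, applied on each subinterval $[a,b]\cap(\lambda_{m-1},\lambda_m)$ with its corresponding $m$, then converts this into $\dist(z_N,\mathrm{Sp}(H_{\mathrm{B}}))\to 0$. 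Since $\{z_N\}\subset[a,b]$ and $[a,b]\cap\mathrm{Sp}(H_{\mathrm{B}})=\{\lambda_n\}$, we conclude $z_N\to\lambda_n$.

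Second I would deduce the eigenspace convergence. By construction $f_N\in\mathrm{ran}(\mathcal{P}_N)$, so $\mathcal{P}_N f_N=f_N$; combined with bandedness of the Hermite matrix representation (bandwidth $\le 3$), this gives $\|(H_{\mathrm{B}}-z_N I)f_N\|=\gamma_N(z_N)\to 0$, and then $z_N\to\lambda_n$ with $\|f_N\|=1$ yields $(H_{\mathrm{B}}-\lambda_n I)f_N\to 0$ by the triangle inequality. The spectral decomposition along $\mathrm{ran}(\mathcal{Q}_n)\oplus\mathrm{ran}(I-\mathcal{Q}_n)$ shows that the restriction of $H_{\mathrm{B}}-\lambda_n I$ to $\mathrm{ran}(I-\mathcal{Q}_n)$ has spectrum $\mathrm{Sp}(H_{\mathrm{B}})\setminus\{\lambda_n\}$, hence is bounded below and boundedly invertible on that subspace. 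Applying this inverse to $(I-\mathcal{Q}_n)(H_{\mathrm{B}}-\lambda_n I)f_N\to 0$ gives $(I-\mathcal{Q}_n)f_N\to 0$, so $\|\mathcal{Q}_n f_N\|\to 1$, and as $E_n$ is one-dimensional, $\angle(\mathrm{span}\{f_N\},E_n)\to 0$.

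The main obstacle is the first step: turning the decay $\|(H_{\mathrm{B}}-z_N I)^{-1}\|^{-1}\to 0$ into true spectral convergence $\dist(z_N,\mathrm{Sp}(H_{\mathrm{B}}))\to 0$. This is exactly the obstruction \textbf{(R2)} emphasised in the main text, which for a general non-Hermitian operator can fail badly because pseudospectra need not collapse to the spectrum; here the difficulty is resolved by the \emph{LTP} estimate \eqref{eqmain_theorem} just proved for $H_{\mathrm{B}}$. A self-contained local version would follow from \eqref{eq:cond_meaning}, which gives the pole-like lower bound $\|(H_{\mathrm{B}}-zI)^{-1}\|=\kappa_n/\dist(z,\lambda_n)+\mathcal{O}(1)$ in a neighbourhood of $\lambda_n$. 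Once this first step is in place, the eigenvector convergence is essentially standard perturbation theory driven by the approximate-eigenvector relation $(H_{\mathrm{B}}-\lambda_n I)f_N\to 0$.
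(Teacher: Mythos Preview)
Your argument is correct, but both steps take a different route from the paper, and your diagnosis of where the difficulty lies is slightly off.

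For the first step, the paper does \emph{not} invoke \textit{LTP} or \eqref{inversion_formula}. It observes that $\gamma_N(z_N)\to 0$ forces $\|(H_{\mathrm{B}}-z_NI)^{-1}\|\to\infty$, and then uses only continuity of the resolvent on the resolvent set together with compactness of $[a,b]$: any limit point of $\{z_N\}$ must lie in $\mathrm{Sp}(H_{\mathrm{B}})\cap[a,b]=\{\lambda_n\}$. This soft argument works for any closed operator, so your identification of this step as ``the main obstacle'' tied to \textbf{(R2)} is a misreading. \textit{LTP} is essential for the \emph{quantitative} error bounds (the verification), not for the bare convergence statement of \cref{SM_conv_thm}. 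Your use of \eqref{inversion_formula} is correct but is more machinery than needed here.

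For the eigenvector step, the paper argues by contradiction using the \emph{orthogonal} splitting $E_n\oplus E_n^\perp$: writing $f_N=a_N+b_N$ with $b_N\in E_n^\perp$ and $\|b_N\|\ge\delta$, one bounds $\gamma_N(z_N)\ge \|(H_{\mathrm{B}}-z_NI)b_N\|-|\lambda_n-z_N|\,\|a_N\|$ and uses $\sigma_{\mathrm{inf}}\bigl((H_{\mathrm{B}}-\lambda_nI)|_{E_n^\perp}\bigr)>0$. You instead use the Riesz (non-orthogonal) splitting $\mathrm{ran}(\mathcal{Q}_n)\oplus\mathrm{ran}(I-\mathcal{Q}_n)$ and the bounded invertibility of $(H_{\mathrm{B}}-\lambda_nI)$ on $\mathrm{ran}(I-\mathcal{Q}_n)$, commuting $I-\mathcal{Q}_n$ through to get $(I-\mathcal{Q}_n)f_N\to 0$ directly. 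Your route is arguably cleaner (direct rather than by contradiction) and makes the role of the spectral projection explicit; the paper's route is more elementary in that it avoids Riesz projections but then needs the positivity of the injection modulus on $E_n^\perp$, which itself ultimately rests on the same spectral fact.
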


\begin{proof}
Since $\|(H_{\mathrm{B}}-\lambda_nI)^{-1}\|^{-1}=0$, the convergence in \cref{gamma_n_conv} and definition of $z_N$ as a minimizer implies that $\lim_{N\rightarrow\infty}\gamma_N(z_N)=0$. Since $\gamma_{N}(z)\geq \|(H_{\mathrm{B}}-zI)^{-1}\|^{-1}$, it follows that $z_N\in\spec_{\epsilon_N}(H_{\mathrm{B}})$ for a sequence $\{\epsilon_N\}$ with $\lim_{N\rightarrow\infty}\epsilon_N=0$ and hence $\lim_{N\rightarrow\infty}\dist(z_N,\spec(H_{\mathrm{B}}))=0$. Since $[a,b]\cap\spec(H_{\mathrm{B}})=\{\lambda_n\}$, $\lim_{N\rightarrow\infty}z_N=\lambda_n$. Suppose for a contradiction that $\lim_{N\rightarrow\infty}\angle(\mathrm{span}\{f_N\},E_n)\neq 0$. Then, by picking a subsequence if necessary, without loss of generality
$
f_N= a_N+b_N,
$
with $a_N\in E_n$, $b_N\in E_n^\perp$ and $\|b_N\|\geq\delta>0$. Hence,
$$
\gamma_N(z_N)=\|(H_{\mathrm{B}}-z_NI)(a_N+b_N)\|\geq \|(H_{\mathrm{B}}-z_NI)b_N\|-\|(H_{\mathrm{B}}-z_NI)a_N\|=
\|(H_{\mathrm{B}}-z_NI)b_N\|-|\lambda_n-z_N|\|a_N\|.
$$
The second term on the right-hand side converges to zero, while the first term is asymptotically bounded below by $\sigma_{\mathrm{inf}}((H_{\mathrm{B}}-\lambda_nI)|_{E_n^\perp})\delta>0$. But this contradicts  $\lim_{N\rightarrow\infty}\gamma_N(z_N)=0$.
\end{proof}

We use a two-stage process to obtain verified values of $\gamma_N$. First, we compute candidate pairs $(z_N,f_N)$ using floating-point arithmetic. To find the minimizer over an interval, we use a bisection method (also called the golden search) \cite[Section 10.2]{press2007numerical}, which rapidly converges with a few evaluations of $\gamma_N$. We perform this step using extended precision if we want eigenvalues to a huge number of digits (such as in the table in the main text). With this candidate in hand, we obtain an upper bound on $\|(H_{\mathrm{B}}-z_NI)f_N\|$ and hence $\gamma_N$ using interval arithmetic. This step is straightforward and efficient, owing to the explicit action of $H_{\mathrm{B}}$ on Hermite functions. For example, \cref{fig:residuals} shows the convergence to the fifth eigenvalue, $\lambda_5$. We have shown the values of $\gamma_N$ and the subspace angle (computed by comparing to a converged eigenfunction) is single, double, and quadruple precision. The convergence is exponential, and this is the case for all of the eigenvalues of $H_{\mathrm{B}}$. To convert the bounds on $\gamma_N$ into bounds on eigenvalues, we must control the resolvent of $H_{\mathrm{B}}$. We turn to this problem in the next section.

\begin{figure}[t]
\centering
\includegraphics[width=0.46\textwidth,trim=0mm 0mm 0mm 0mm,clip]{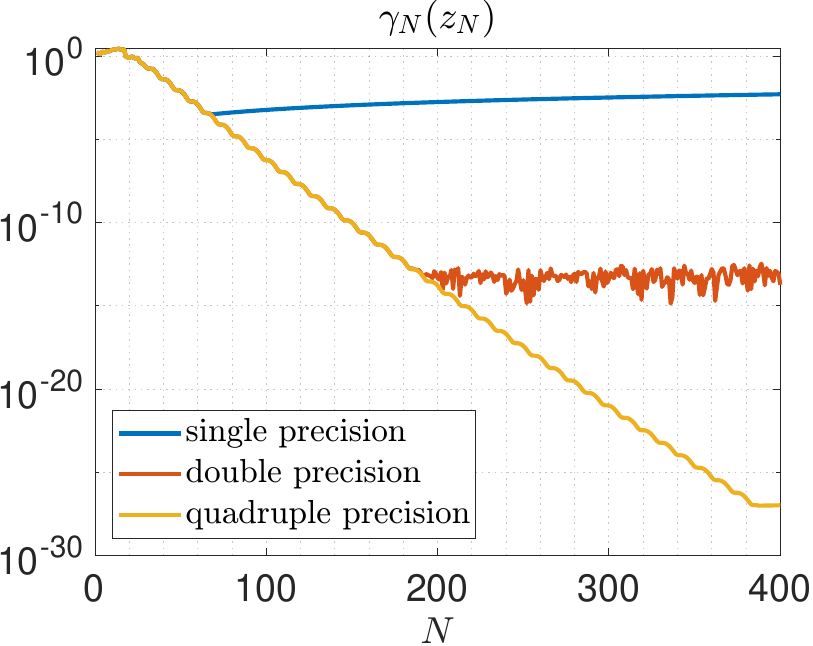}
\hfill
\includegraphics[width=0.46\textwidth,trim=0mm 0mm 0mm 0mm,clip]{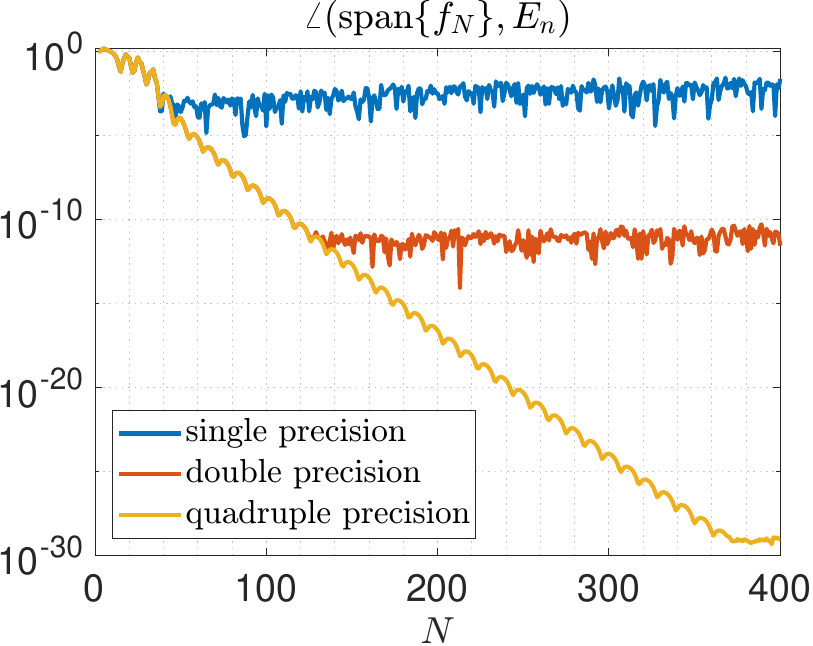}
\caption{Illustration of \cref{SM_conv_thm} for the fifth eigenvalue, $\lambda_5$. The method converges exponentially to eigenvalues and eigenfunctions. Left: Exponential convergence of the residuals $\gamma_N$ to zero. Right: Exponential convergence to the eigenfunction. These bounds are then verified using interval arithmetic.}
\label{fig:residuals}
\end{figure}

\subsection{Resolvent bound and verification (and proof of Theorem 3 from main text)}

In this section, we prove the result in the main text that bounds the resolvent norm of the imaginary cubic oscillator. Recall that $H_{\mathrm{B}}$ denotes the imaginary cubic oscillator. Part of our proof adapts and generalizes some bounds in \cite{dondl2017bound} that showed that the unbounded part of the pseudospectrum of $H_{\mathrm{B}}$ escapes towards $+\infty$ as $\epsilon$ decreases.

\begin{proof}[Proof of resolvent bound in main text]
The operator
$$
\mathcal{Q}_n=\frac{-1}{2\pi i}\int_{\tau_n} (H_{\mathrm{B}}-zI)^{-1}\dd z
$$
is the Riesz projection corresponding to the $n$th eigenvalue, $\lambda_n$, of $H_{\mathrm{B}}$ \cite[p. 178]{kato2013perturbation}. Here, $\tau_n$ is a contour that wraps once around the eigenvalue $\lambda_n$ and no other eigenvalues. The connection here is that the condition number from the main text is given by $\kappa_n=\|\mathcal{Q}_n\|$.
The operator $H_{\mathrm{B}}$ has compact resolvent, its eigenfunctions form a complete family, and its eigenvalues have algebraic multiplicity one \cite{dorey2001spectral,tai2006simpleness}. The space $L^2(\mathbb{R})$ decomposes into a (non-orthogonal) direct sum of closed, $H_{\mathrm{B}}$-invariant subspaces
$$
L^2(\mathbb{R})=\mathrm{Ran}(\mathcal{Q}_1)\oplus \cdots \oplus \mathrm{Ran}(\mathcal{Q}_m)\oplus \mathrm{Ran}(I-\mathcal{S}_m),
$$
where $\mathcal{S}_m=\sum_{n=1}^m\mathcal{Q}_n$. Using the fact that each $\mathcal{Q}_n$ commutes with $H_{\mathrm{B}}$, we have
$$
(H_{\mathrm{B}}-zI)^{-1}=( I- \mathcal{S}_m)(H_{\mathrm{B}}-zI)^{-1}( I- \mathcal{S}_m)+\sum_{n=1}^m(\lambda_n-z)^{-1}\mathcal{Q}_n.
$$
It follows that
\begin{align}
\|(H_{\mathrm{B}}- zI)^{-1} \|\leq \left(1+\sum_{n=1}^m\|\mathcal{Q}_n\|\right)\|(H_{\mathrm{B}}-zI)|_{\mathrm{Ran}(I-\mathcal{S}_m)}^{-1}\|+\sum_{n=1}^m\frac{\|\mathcal{Q}_n\|}{|z-\lambda_n|}.\label{eq:cubic_ex1}
\end{align}
To bound  $\|(H_{\mathrm{B}}-zI)|_{\mathrm{Ran}(I-\mathcal{S}_m)}^{-1}\|$ we will use the Hille--Yosida theorem \cite{pazy2012semigroups}. We can do this since $-H_{\mathrm{B}}$ and $-H_{\mathrm{B}}|_{\mathrm{Ran}(I-\mathcal{S}_m)}$ generate contraction semigroups $\{S(t)\}_{t\geq 0}$ with $\|S(t)\|\leq1$ (the operators $H_{\mathrm{B}}$ and $H_{\mathrm{B}}|_{\mathrm{Ran}(I-\mathcal{S}_m)}$ are $m$-accretive). We will also use the following lemma from \cite[Lemma 3.1]{davies2005semigroup}.
\begin{lemma}
\label{lem:davies_lemma}
Let $\{S(t)\}_{t\geq 0}$ be a strongly continuous semigroup on a Banach space and $\{\psi_n\}_{n=1}^\infty$
be a complete set of linearly independent vectors. Let $S_n(t)$ denote the restriction of $S(t)$ to $\mathrm{span}\{\psi_1,\ldots,\psi_n\}$. Then for any $t\geq 0$,
$
\|S(t)\|=\lim_{n\rightarrow\infty}\|S_n(t)\|.
$
\end{lemma}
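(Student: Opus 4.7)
The plan is to verify the two inequalities $\limsup_n \|S_n(t)\| \leq \|S(t)\|$ and $\liminf_n \|S_n(t)\| \geq \|S(t)\|$ separately, then combine them. The first is immediate: since each $\mathrm{span}\{\psi_1,\ldots,\psi_n\}$ is a subspace of the ambient Banach space $X$, the operator norm of the restriction can only be smaller, so $\|S_n(t)\| \leq \|S(t)\|$ for every $n$. Because the subspaces are nested, $n \mapsto \|S_n(t)\|$ is in fact monotone nondecreasing and bounded above by $\|S(t)\|$, so its limit exists.

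For the nontrivial inequality I would argue by density. Fix $t \geq 0$ and $\epsilon > 0$. Recall that $S(t)$ is a bounded operator on $X$ (a standard property of $C_0$-semigroups), and choose a unit vector $v \in X$ with $\|S(t) v\| > \|S(t)\| - \epsilon$. The hypothesis that $\{\psi_n\}_{n=1}^\infty$ is complete and linearly independent means that the algebraic union $\bigcup_n \mathrm{span}\{\psi_1,\ldots,\psi_n\}$ is dense in $X$, so I can pick a sequence $v_n \in \mathrm{span}\{\psi_1,\ldots,\psi_n\}$ with $v_n \to v$. Then $\|v_n\| \to 1$, and by continuity of $S(t)$, $\|S(t) v_n\| \to \|S(t) v\|$. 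For all sufficiently large $n$ we have $v_n \neq 0$, the normalized vector $v_n/\|v_n\|$ lies in $\mathrm{span}\{\psi_1,\ldots,\psi_n\}$, and therefore
$$
\|S_n(t)\| \;\geq\; \frac{\|S(t) v_n\|}{\|v_n\|} \;\xrightarrow[n \to \infty]{}\; \|S(t) v\| \;>\; \|S(t)\| - \epsilon.
$$
Taking $\epsilon \downarrow 0$ yields $\lim_n \|S_n(t)\| \geq \|S(t)\|$, and combining with the first inequality gives the claim.

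There is no real obstacle here: the argument is essentially the observation that an operator norm can be computed on any dense subset, packaged through the nested filtration supplied by the $\psi_n$. The only mild points to check are that $S(t)$ is bounded (so that $v_n \to v$ implies $S(t) v_n \to S(t) v$) and that the normalization $v_n/\|v_n\|$ is eventually well defined, both of which follow from the hypotheses. The lemma does not require strong continuity of the semigroup in $t$ at all; it uses only that $S(t)$ is a single bounded operator, together with completeness of $\{\psi_n\}$ in $X$.
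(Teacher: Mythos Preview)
Your argument is correct. The paper does not actually supply its own proof of this lemma: it is quoted verbatim from Davies~\cite[Lemma~3.1]{davies2005semigroup} and then applied, so there is nothing in the paper to compare against. Your density argument is the natural one, and your closing remark is apt---the result is really a statement about a single bounded operator and a nested dense filtration, with the semigroup structure playing no role.
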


Using \cref{lem:davies_lemma}, we observe that
$$
\| e^{-t H_{\mathrm{B}}}|_{\mathrm{Ran}(I-\mathcal{S}_m)}\| \leq e^{-t \lambda_{m+1}} \sum_{n=m+1}^{\infty} e^{-t(\lambda_n - \lambda_{m+1})}\|\mathcal{Q}_n\|
$$
We assume that $\| Q_n \| \leq \exp(\frac{n\pi}{\sqrt{3}})$ and use the fact that $\lambda_n - \lambda_{m+1} \geq (n - (m+1)) (\frac{\pi}{\sqrt{3}} + 1)$. We have for $t \geq 1$,
\begin{align*}
\sum_{n={m+1}}^\infty e^{-t(\lambda_n-\lambda_{m+1})}{\|\mathcal{Q}_n\|}&\leq \sum_{n={m+1}}^\infty e^{-(\lambda_n-\lambda_{m+1})}{\|\mathcal{Q}_n\|}\\
&\leq \sum_{n={m+1}}^\infty e^{-(n-(m+1))({\pi}/{\sqrt{3}}+1)}\exp(n\pi/\sqrt{3})\\
&= \exp\left({(m+1)\pi}/{\sqrt{3}}\right) \sum_{n={m+1}}^\infty e^{m+1-n}= \frac{e}{e-1}\exp\left({(m+1)\pi}/{\sqrt{3}}\right).
\end{align*}
We must also consider the case that $t<1$. Here, we use the facts that $\|e^{-tH_{\mathrm{B}}}|_{\mathrm{Ran}(I-\mathcal{S}_m)}\|\leq 1$ for any $t$ and $\lambda_{m+1}\leq c(m+1)^{6/5}$ with $c=[2 \Gamma(\frac{11}{6})\sqrt{\pi}/({\sqrt{3} \Gamma(\frac{4}{3})})]^{6/5}$ to see that if $t<1$, then
$$
\|e^{-tH_{\mathrm{B}}}|_{\mathrm{Ran}(I-\mathcal{S}_m)}\|\leq e^{-\lambda_{m+1}}e^{c(m+1)^{6/5}} \leq e^{-t\lambda_{m+1}}e^{c(m+1)^{6/5}}.
$$
It follows that
$$
\|e^{-tH_{\mathrm{B}}}|_{\mathrm{Ran}(I-\mathcal{S}_m)}\|\leq e^{-t\lambda_{m+1}}\max\left\{\frac{e}{e-1}\exp\left({(m+1)\pi}/{\sqrt{3}}\right),e^{c(m+1)^{6/5}}\right\}\leq e^{-t\lambda_{m+1}}e^{c(m+1)^{6/5}}.
$$
The Hille--Yosida theorem now implies that
$$
\|(H_{\mathrm{B}}-zI)|_{\mathrm{Ran}(I-\mathcal{S}_m)}^{-1}\|\leq \frac{e^{c(m+1)^{6/5}}}{\lambda_{m+1}-\mathrm{Re}(z)}\quad \forall z\in\mathbb{C}\text{ with }\mathrm{Re}(z)<\lambda_{m+1}.
$$
Using $\| Q_n \| \leq \exp(\frac{n\pi}{\sqrt{3}})$ again, we see that if $\mathrm{Re}(z)\leq \lambda_m$, then
$$
\left(1+\sum_{n=1}^m\|\mathcal{Q}_n\|\right)\|(H_{\mathrm{B}}-zI)|_{\mathrm{Ran}(I-\mathcal{S}_m)}^{-1}\|\leq \frac{e^{(m+1)\pi/\sqrt{3}}-1}{e^{\pi/\sqrt{3}}-1}\cdot\frac{e^{c(m+1)^{6/5}}}{\pi/\sqrt{3}+1}.
$$
Combining with \cref{eq:cubic_ex1}, we see that if $\lambda_{m-1}\leq \mathrm{Re}(z)\leq \lambda_m$ with $z\not\in\spec(H_{\mathrm{B}})$, then
\begin{align*}
\|(H_{\mathrm{B}}-zI)^{-1}\|&\leq 
\frac{\|\mathcal{Q}_{m-1}\|}{|\lambda_{m-1}-z|}+\frac{\|\mathcal{Q}_{m}\|}{|\lambda_{m}-z|}+\frac{1}{\pi/\sqrt{3}+1}\left[
\frac{e^{(m-1)\pi/\sqrt{3}}-1}{e^{\pi/\sqrt{3}}-1}-1+
\frac{e^{(m+1)\pi/\sqrt{3}}-1}{e^{\pi/\sqrt{3}}-1}\cdot e^{c(m+1)^{6/5}}\right]\\
&\leq \frac{\|\mathcal{Q}_{m-1}\|}{|\lambda_{m-1}-z|}+\frac{\|\mathcal{Q}_{m}\|}{|\lambda_{m}-z|}
+ \frac{\exp[(m+1)\pi/\sqrt{3}+c(m+1)^{6/5}]}{14}.
\end{align*}
This completes the proof of the bound in the main text.
\end{proof}

Using this resolvent bound and combining with \cref{gamma_n_conv}, we have that for any $N\in\mathbb{N}$ and $\lambda_{m-1}< \mathrm{Re}(z_N)< \lambda_m$ with $c_m\gamma_N(z_N)<1$,
$$
\mathrm{dist}(z_N,\mathrm{Sp}(H_{\mathrm{B}}))\leq \frac{2\exp(m\frac{\pi}{\sqrt{3}})\|(H_{\mathrm{B}}-z_NI)^{-1}\|^{-1}}{1-c_m\|(H_{\mathrm{B}}-z_NI)^{-1}\|^{-1}}\leq \frac{2\exp(m\frac{\pi}{\sqrt{3}})\gamma_N(z_N)}{1-c_m\gamma_N(z_N)}.
$$
This means that we can explicitly bound the error of the eigenvalues computing using our minimization procedure in the previous subsection. Moreover, this error bound converges to zero as $N\rightarrow\infty$ since $\lim_{N\rightarrow\infty}\gamma_{N}(z_N)=0$. Since we are bounding the projection norms, a similar bound allows us to bound the subspace angles when computing the eigenfunctions.

\subsection{Example: Extension to exponentially decaying interaction terms}

As well as extending non-local operators using $H^*H$ outlined in the main text, we can extend the above rectangular truncations to non-local interactions. We consider a generalization of the lattice model on $l^2(\mathbb{Z})$ in the main text:
$$
[Hx]_n = \left(\frac{n^2}{10}+2i\sin(n)\right)x_n +\sum_{j=1}^\infty 2^{1-j}(x_{n-j}+x_{n+j}).
$$
Such non-local models are common in physical applications. To proceed, we note that if $\mathcal{P}_n$ denotes the projection onto $\mathrm{span}\{e_{-n},\ldots,e_n\}$ and $-n\leq j\leq n$, then
$$
\|(I-\mathcal{P}_{n+m})H\mathcal{P}_{n}x_j\|^2\leq 2 \sum_{s=m+1}^\infty 2^{2-2s}=\frac{8}{3}2^{-m}.
$$
Hence, we may choose $m$ (algebraically) dependent on $n$ such that
$$
\sigma_{\mathrm{inf}}(\mathcal{P}_{n+m}(H-zI)\mathcal{P}_n)\leq
\sigma_{\mathrm{inf}}((H-zI)\mathcal{P}_n)\leq
\sigma_{\mathrm{inf}}(\mathcal{P}_{n+m}(H-zI)\mathcal{P}_n)+2^{-n}.
$$
The above analysis now carries through by replacing $\gamma_N(z)$ with $\sigma_{\mathrm{inf}}(\mathcal{P}_{n+m}(H-zI)\mathcal{P}_n)+2^{-n}$. For example, \cref{fig:exponentially_decaying} shows verified pseudospectra and eigenvalues for this operator. Dealing with other long-range interactions is similar and boils down to controlling the tail of the error of rectangular truncations.

\begin{figure}[t]
\centering
\begin{minipage}{0.49\textwidth}
\includegraphics[width=1\textwidth,trim=0mm 0mm 0mm 0mm,clip]{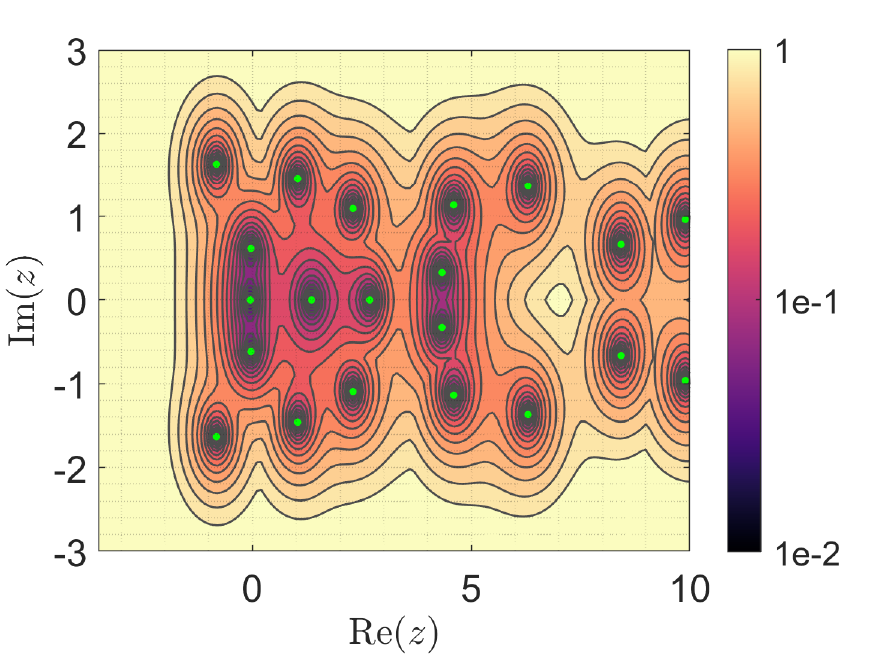}
\end{minipage}
\begin{minipage}{0.49\textwidth}
\begin{tabularx}{0.6\textwidth}{c}
Verified eigenvalue $\lambda_n$\\
\hline
$-0.04918293439$\hphantom{0000000000000000}\\
$-0.03617194872 \pm 0.61505608475i$\\
\hphantom{0}$1.35013464198$\hphantom{000000000000000}\\
\hphantom{00}$1.03403695407 \pm 1.45833018187i$\\
           $-0.82205220030 \pm 1.63118907210i$\\
\hphantom{00}$2.29590609739 \pm 1.09352704384i$\\
\hphantom{00}$2.67955625201$ \hphantom{000000000000000}\\
\hphantom{00}$4.33006751746 \pm 0.32817472558i$\\
\hphantom{00}$4.59741555217 \pm 1.13666442725i$\\
\hphantom{00}$6.29452856210 \pm 1.36850481036i$\\
\hphantom{00}$8.42195903570 \pm 0.66308931452i$\\
\hphantom{00}$9.90065477883 \pm 0.96114015191i$\\
\hline
\end{tabularx}
\end{minipage}
\caption{Illustration of the method for non-local model. Left: Verified pseudospectra. Right: A selection of the eigenvalues computed with verified absolute error $<10^{-10}$.}
\label{fig:exponentially_decaying}
\end{figure}



\end{document}